\newcommand{\map}[3]{#1: #2 \rightarrow #3}
\newcommand{\real}{\ensuremath{\mathbb{R}}}
\newcommand{\realnonnegative}{\ensuremath{\mathbb{R}}_{\ge 0}}
\newcommand\oprocendsymbol{\hbox{$\triangle$}}
\newcommand\oprocend{\relax\ifmmode\else\unskip\hfill\fi\oprocendsymbol}
\DeclareSymbolFont{bbold}{U}{bbold}{m}{n}
\DeclareSymbolFontAlphabet{\mathbbold}{bbold}
\newcommand{\vect}[1]{\mathbbold{#1}}
\NewDocumentCommand\norm{mg}{\lVert #1 \rVert \IfNoValueF{#2}{_{#2}}}
\NewDocumentCommand\bignorm{mg}{\big\lVert #1 \big\rVert \IfNoValueF{#2}{_{#2}}}
\NewDocumentCommand\Bignorm{mg}{\Big\lVert #1 \Big\rVert \IfNoValueF{#2}{_{#2}}}
\NewDocumentCommand\seminorm{mg}{\lvert\!\lvert\!\lvert #1 \rvert\!\rvert\!\rvert \IfNoValueF{#2}{_{#2}}}
\newtheorem{theorem}{Theorem}%[section]
\newtheorem{remark}[theorem]{Remark}
\newcommand{\HKn}{C_{N}}
\title{\LARGE\bf Multi-dimensional extensions of the Hegselmann-Krause  model}
 \author{Giulia De Pasquale and Maria Elena Valcher
 \thanks{
This work was supported in part from Fondazione Ing. Aldo Gini.
This paper has been submitted to the 61st Conference on Decision and Control (CDC 2022), Cancun, Mexico.
G. De Pasquale and M.E. Valcher are with
 the Dipartimento di Ingegneria dell'Informazione
 Universit\`a di Padova, 
    via Gradenigo 6B, 35131 Padova, Italy, e-mail:  \texttt{giulia.depasquale@phd.unipd.it, meme@dei.unipd.it}.
   } 
  }
\def\qed {{% set up
		\parfillskip=0pt % so \par doesnt push \square to left
		\widowpenalty=10000 % so we dont break the page before \square
		\displaywidowpenalty=10000 % ditto
		\finalhyphendemerits=0 % TeXbook exercise 14.32
		%
		% horizontal
		\leavevmode % \nobreak means lines not pages
		\unskip % remove previous space or glue
		\nobreak % don't break lines
		\hfil % ragged right if we spill over
		\penalty50 % discouragement to do so
		\hskip.2em % ensure some space
		\null % anchor following \hfill
		\hfill % push \square to right
		$\blacksquare$% % the end-of-proof mark
		%
		% vertical
		\par}}
\newtheorem{mythe}{Theorem}
\newtheorem{myrem}[mythe]{Remark}
\newtheorem{myexp}[mythe]{Example}
\newtheorem{mylem}[mythe]{Lemma}
\newtheorem{mydef}[mythe]{Definition}
\newtheorem{myprop}[mythe]{Proposition}
\begin{document}

\maketitle
\thispagestyle{empty}
\pagestyle{empty}

%%%%%%%%%%%%%%%%%%%%%%%%%%%%%%%%%%%%%%%%%%%%%%%%%%%%%%%%%%%%%%%%%%%%%%%%%%%%%%%%

\begin{abstract}
In this paper we consider two multi-dimensional Hagselmann-Krause (HK) models for opinion dynamics.  The two   models describe how individuals  adjust  their opinions on multiple topics, based on the influence of their peers. The models differ in the criterion according to which individuals decide whom they want to be influenced from. 
 In the average-based model individuals compare their average opinions on the various topics with those of the other individuals, and interact only with those individuals whose average opinions lie within a confidence interval. For this model we  provide an alternative proof for the contractivity of the range of opinions, and show that the agents'  opinions reach consensus/clustering if and only  if their average opinions  do so. 
  In the uniform affinity model agents compare their opinions on each single topic and influence each other only if, topic-wise,   such opinions do not differ more than a given tolerance.  We    identify conditions under which the uniform affinity model enjoys the order-preservation property topic-wise and we prove that  the global range of opinions (and hence the range of opinions on each single topic) are non-increasing. 
\end{abstract}

%%%%%%%%%%%%%%%%%%%%%%%%%%%%%%%%%%%%%%%%%%%%%%%%%%%%%%%%%%%%%%%%%%%%%%%%%%%%%%%%
\section{Introduction}

\emph{Problem description and motivation:} Social sciences \cite{RW-JMM-RH-FRH-OM-RS-AG-PK-LH-RT:18}, psychology \cite{WJB-MJC-JJVB:20}, economy \cite{hiller} and control engineering \cite{AltafiniPlosOne} are all research areas that have a strong interest in understanding and describing opinion dynamics in social networks. As a consequence, many models of opinion dynamics have been proposed along time \cite{RH-UK:02}, \cite{NP-FD:19}. A problem of interest when dealing with social networks is the modelling and analysis of the spread of information in the network. Different works that address this problem and that focus on different diffusion mechanisms have been proposed, see \cite{RH-UK:02}, \cite{WM-PCV-GC-NEF-FB:17f}. 
A common  objective, in this context, is to understand when  reaching a \emph{consensus}, as a consequence of complex interactions among the agents in the network, is possible \cite{SRE-TB:15}.  Consensus is an active research topic in many fields \cite{CG-FB-JC-AJ:05q,GN-FB:06d}. It is  about the achievement of an agreement or of a common goal  by agents in a network. However, there are contexts in which the reaching of a consensus is either not desirable or does not represent a realistic scenario. This is the case also when dealing with social contexts, e.g. political elections, surveys. It is in these contexts that the \emph{disagreement} phenomenon, along with consensus, becomes of interest \cite{SRE-TB:15}.

A sociological model that considers both consensus and disagreement is the one known in the literature as Hegselmann-Krause (HK) model \cite{RH-UK:02}.
The HK dynamics evolves under a bounded-confidence mechanism. Confidence intervals are expressed as a function of the gap between pairs of agents' opinions. Since only agents whose opinions are close enough interact, the model represents a mathematical abstraction of \emph{confirmation bias} \cite{MDV-AS-GC-HES-WQ:17}. Confirmation bias is based on the natural human propensity to search for and welcome information that supports prior beliefs \cite{RSN:98}. In this paper we focus our attention on the Hegselmann-Krause model, assuming that agents are asked to express their opinion on a pre-fixed and finite number of topics. This represents an extension of the classical  scalar version \cite{RH-UK:02}. A  multi-dimensional version of the model has been already studied in the literature \cite{SRE-TB:15,SRE-TB-AN-BT:13,AN-BT:12}. It also finds a valid interpretation in other contexts such as robotics rendezvous problems 
%in the plane and space
\cite{FB-JC-SM:09}. The characterization of the dynamics in the multi-dimensional case is not trivial and some open questions still remain. In addition, we believe that the proposed multi-dimensional extension is not the only possible one. This is what motivates the work here presented.
\smallskip

\emph{Literature review:} The HK model is one of the first models that considers disagreement beside consensus. It represents an extension of the Friedkin and Johnsen model \cite{NEF-ECJ:99} to the case in which the topology of the network is opinion-variant. The model has attracted the interest of many researchers along time and many extensions have been proposed. 

The scalar model has been studied both in continuous \cite{JL:05a} and discrete time \cite{FV-CB-RI:21}. When all agents adopt the same confidence interval the model is said to be \emph{homogeneous}. Otherwise it is called \emph{heterogeneous}  \cite{GF-WZ-ZL:15}. If the lower and upper thresholds of acceptance of other agents' opinions are the same the model is called \emph{symmetric}. It is \emph{asymmetric} otherwise \cite{AB-MB-BC-HLN:13}. For the multi-dimensional case we mention the non-exhaustive list of works \cite{SRE-TB-AN-BT:13, AN-BT:12,SRE-TB:15, AB-MB-BC-HLN:13,SM-ET:14}. \\
In \cite{SRE-TB-AN-BT:13}  the authors focus on the investigation of the termination time of the dynamics. The analysis is based on Lyapunov arguments and  a polynomial upper bound for the case in which the connectivity of the network maintains some specific structure is provided. The work in \cite{AN-BT:12} focuses on the homogeneous multidimensional HK model, and assumes that  confidence intervals are expressed in terms of vector norms. Stability properties of the model are investigated and  the finite time convergence of the dynamics is proved. The results are valid regardless of the choice of the norm into play.
In \cite{SRE-TB:15} the evolution of the HK model under various assumptions is studied. First the termination time of the synchronous HK model in arbitrary finite dimensions is  analyzed and shown to be independent of the dimension of the opinion vectors.  The convergence speed of the dynamics is related to the eigenvalues of the adjacency matrix of the connectivity graph. A game-theoretic approach to the study of the asynchronous model is employed and  some results on the time of convergence of this variant are provided.  Finally, in the heterogeneous case   a necessary condition for the termination time to be finite is provided. 
 The work  \cite{AB-MB-BC-HLN:13} studies the case in which the confidence bound is determined through the Euclidean norm. The work focuses on the bounds on the convergence time of the system. The paper also investigates a noisy version of the model. 
In \cite{SM-ET:14} the   authors focus on heterophilious dynamics, namely on the tendency to be more welcoming towards different ideas rather than similar ones. The work shows a comparison among different functions that determine the  interactions between agents. They show that the heterophily mechanism enhances consensus more than the homophily one.
\smallskip

\emph{Contributions:} In this paper we consider two multi-dimensional  HK models for opinion dynamics:  the average-based model and the uniform affinity model.
  In the average-based mode that,  to the best of our knowledge, has not been considered before in the literature, individuals compare their average opinions on the various topics with those of the other individuals, and interact only with those individuals whose average opinions lie within a confidence interval. For this model we   provide an alternative proof for the contractivity of the range of opinions, and show that the agents'  opinions reach consensus/clustering if and only  if their average opinions  do so. 
 The uniform affinity model is a special instance of the multi-dimensional HK model investigated in \cite{AB-MB-BC-HLN:13,SRE-TB:15, SRE-TB-AN-BT:13, AN-BT:12}, where we specifically adopt the $\ell_\infty$-norm. In other words,  agents compare their opinions on each single topic and influence each other only if, topic-wise,  such opinions do not differ more than a given tolerance.  We   identify conditions under which the  uniform affinity model enjoys the order-preservation property topic-wise and we prove that  the global range of opinions (and hence the range of opinions on each single topic) are non-increasing.  
\smallskip

\emph{Paper organization:} Section II introduces some notation and prelimianry definitions and results. Section III to VI investigate the opinion ranges and the steady state behavior of the avergae-based HK model.
Finally, Section VII introduces the uniform affinity model and investigates some conditions that ensure the order  preservation of the opinions on each single topic.

\section{Notation and Preliminaries} 
In the following, $\realnonnegative$ denotes the set of nonnegative real numbers.
We let $\vect{1}_N$ and $\vect{0}_N$ denote the $N$-dimensional vectors of all ones and all zeros, respectively. The symbol $\vect{e}_i$ denotes the $i$-th vector of the canonical basis of $\real^N$, where $N$ will be clear from the context. 
Given a set ${\mathcal S}$, we denote by $|{\mathcal S}|$ its cardinality. \\
 Consider the complete undirected graph with $N$ nodes ${\mathcal G}=({\mathcal V}, {\mathcal E})$, where ${\mathcal V}=\{1,2,\dots, N\}$ and ${\mathcal E}= ({\mathcal V}\times {\mathcal V}) \setminus\{(i,i): i\in {\mathcal V}\}$.  If $|{\mathcal E}| =m$, we let $C_N\in \{-1, 0, 1\}^{N\times m}$ denote its {\em oriented incidence matrix}   \cite{FB:20}, defined as follows.
 For every vertex $h\in \mathcal{V}$ and every edge $e=(i,j) \in {\mathcal E}$, we have 
 $$[\HKn]_{h,e} = \begin{cases}
 1, & h=i;\\
 -1,& h=j;\\
 0,& {\rm otherwise}.\end{cases}
 $$
 Given a matrix $X\in \real^{N \times N}$, we denote by $X_{i*}$ the $i$-th row of $X$, by $X_{*j}$ its $j$-th column, and by $X_{ij}$ its $(i,j)$-th entry.

\begin{mydef}[Vector $\ell_\infty$-norm]
Given $x\in\real^N$,   the \emph{$\ell_\infty$-norm} of $x$ is
  $\norm{x}{\infty} =
 % \lim_{p\to \infty} \Big( \sum_{i=1}^n \lvert x_i\rvert^p \Big)^{\frac{1}{p}} = 
  \max_{i} \lvert x_i \rvert.$
\end{mydef}
\begin{mydef}[Seminorms]
  A function $\map{\seminorm{ \cdot} }{\real^N}{\realnonnegative}$  is a \emph{seminorm} on $\real^N$ if it satisfies the following properties:
\begin{align*}
   & \text{(homogeneity): } \seminorm{ a x }= \lvert a
  \rvert\seminorm{ x },  \forall\ x\in \real^N \text{
    and } a \in \real;\\ & \text{(subadditivity): } \seminorm{ x + y}
  \leq \seminorm{ x}  + \seminorm{ y} ,  \forall \ x, y\in \real^N.
\end{align*}
%The \emph{kernel} $\mathcal{W} \subseteq \real^n$ of a seminorm $\map{\seminorm{ \cdot} }{\real^n}{\realnonnegative}$ is the subspace of vectors $w \in
%\real^n$ satisfying $\seminorm{w}=0$.
\end{mydef}
\begin{mydef}\emph{($\ell_\infty$ weighted seminorm \cite{SJ-PCV-FB:19q+arxiv})}
  \label{semi_weighted}
  Let 
  $\map{\norm{ \cdot }{\infty}}{\real^k}{\realnonnegative}$
be the $\ell_\infty$-norm on $\real^k$ and  let $R \in
  \real^{k \times N}$. The \emph{$R$-weighted seminorm on $\real^N$
    associated with the $\ell_\infty$-norm on $\real^k$ is}
  \begin{equation}
    \seminorm{ x }_{\infty, R} := \norm{ Rx }{\infty},  \quad \forall x \in
    \real^N. \notag
  \end{equation}
\end{mydef}
%{\color{black} The kernel ${\mathcal W}$ of the seminorm coincides with ker $R$.}
\begin{comment}
\begin{mylem}[Lemma 14 in \cite{GDP-FB-MEV:arXiv}]\label{mylem:specialseminorm}
  Given a seminorm $\map{\seminorm{ \cdot}}{\real^n}{\realnonnegative}$ with kernel $\mathcal{W} \subseteq \real^n$,  any matrix 
  $A\in\real^{n\times{n}}$ such that $A \mathcal{W}\subseteq \mathcal{W}$
  satisfies
  \begin{equation}\label{eq:noperp}
    \seminorm{ A } = \max_{\substack{\seminorm{x} \leq1}} \seminorm{ A x}.
  \end{equation}
\end{mylem}
A consequence of Lemma \ref{mylem:specialseminorm} is that, when $A \mathcal{W}\subseteq \mathcal{W}$, the \emph{conditional sub-multiplicative property} of seminorms \cite{VVK:83}, namely
\begin{equation*}
    \seminorm{Ax} \leq \seminorm{A} \seminorm{x}, \quad \forall x \perp \mathcal{W}
\end{equation*}
reduces to
\begin{equation}\label{myeq:forproof}
    \seminorm{Ax} \leq \seminorm{A} \seminorm{x}, \quad \forall x. 
\end{equation}
Si dimostra cosi : $\seminorm{Ax} = \seminorm{A(x_\perp + x_\parallel)} \leq \seminorm{Ax_\perp} + \seminorm{Ax_\parallel} = \seminorm{Ax_\perp} \leq \seminorm{A} \seminorm{x_\perp}$ 
\end{comment}
%\begin{mylem}[Lemma 1 in \cite{VVK:83}] {\color{red} ma noi non lo usiamo per un prodotto di matrici di ugual dimensione ma per il prodotto di una matrice e un vettore}
%Let $\map{\seminorm{\cdot}}{\real^n}{\realnonnegative}$ be a seminorm on $\real^n$. Then, for every $A,B \in \real^{n \times n}$: 
%\begin{equation}\label{eq:submuprop}
%    \seminorm{AB} \leq \seminorm{A}\seminorm{B}.
%\end{equation}
%\end{mylem}

\begin{myexp}[$\HKn^\top$-weighted  seminorm \cite{GDP-FB-MEV:arXiv}]\label{HK_wei}
Given a vector $x~\in~\real^N$  and  the oriented incidence matrix $\HKn \in \real^{N \times m}$, the $\HKn^\top$-weighted seminorm of $x$ associated with the $\ell_\infty$-norm is 
\begin{align*}
    \seminorm{ x}_{\infty,  \HKn^\top}
     = \max_{i, j} | x_i-x_j |. 
\end{align*}
\end{myexp}

\begin{mylem}[Preliminary lemma]\label{submult}
 Given a vector $x\in {\mathbb R}^N$ and a row stochastic matrix $A\in {\mathbb R}^{N \times N}$, 
$$\seminorm{Ax}_{\infty, C_N^\top} \le \seminorm{A}_{\infty, C_N^\top} \ \seminorm{x}_{\infty, C_N^\top},$$
where
$$\seminorm{A}_{\infty, C_N^\top} := \max_{\substack{\seminorm{x}{\infty, C_N^\top} =1\\ x \perp {\rm ker}(C_N^\top)}}\seminorm{Ax}_{\infty,C_N^\top}$$
is the $\HKn^\top$-weighted, $\ell_\infty$ induced seminorm of $A$.
\end{mylem}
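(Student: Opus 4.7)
The plan is to reduce to the definition of $\seminorm{A}_{\infty, C_N^\top}$, which only quantifies the action of $A$ on vectors orthogonal to $\ker(C_N^\top)$, by exploiting the fact that a row stochastic matrix leaves this kernel invariant.

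First I would identify $\ker(C_N^\top) = \mathrm{span}(\vect{1}_N)$: indeed $C_N^\top x$ is the vector of differences $x_i - x_j$ along the edges of the complete graph, so it vanishes exactly on the constant vectors. Next, since $A$ is row stochastic, $A\vect{1}_N = \vect{1}_N$, so $A$ maps $\ker(C_N^\top)$ into itself, and in fact acts as the identity on it.

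Then I would split $x = x_\perp + x_\parallel$, with $x_\parallel \in \ker(C_N^\top)$ and $x_\perp \perp \ker(C_N^\top)$. Using subadditivity and homogeneity, together with $\seminorm{x_\parallel}_{\infty, C_N^\top} = 0$, one gets
$$\seminorm{x}_{\infty, C_N^\top} = \seminorm{x_\perp}_{\infty, C_N^\top},$$
and since $A x_\parallel = x_\parallel \in \ker(C_N^\top)$,
$$\seminorm{Ax}_{\infty, C_N^\top} \le \seminorm{A x_\perp}_{\infty, C_N^\top} + \seminorm{x_\parallel}_{\infty, C_N^\top} = \seminorm{A x_\perp}_{\infty, C_N^\top}.$$

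Finally, I would apply homogeneity to the defining maximum: for any nonzero $y \perp \ker(C_N^\top)$, normalising $y/\seminorm{y}_{\infty, C_N^\top}$ and invoking the definition yields $\seminorm{Ay}_{\infty, C_N^\top} \le \seminorm{A}_{\infty, C_N^\top}\, \seminorm{y}_{\infty, C_N^\top}$ (the case $y=0$ being trivial). Taking $y = x_\perp$ and chaining the inequalities gives the claim. The only subtlety — and the place where the row stochasticity of $A$ enters crucially — is step three: without the invariance $A \ker(C_N^\top) \subseteq \ker(C_N^\top)$, the component $A x_\parallel$ could contribute a nonzero seminorm term and the reduction to $x_\perp$ would fail.
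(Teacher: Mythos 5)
Your proof is correct and takes essentially the same route as the paper: the paper's proof merely observes that row stochasticity gives $A\,(\ker C_N^\top)\subseteq \ker C_N^\top$ (with $\ker C_N^\top=\mathrm{span}(\vect{1}_N)$) and then cites the conditional sub-multiplicativity of seminorms together with a lemma from a companion reference, and the orthogonal decomposition $x=x_\perp+x_\parallel$ you write out is precisely the argument underlying those citations. You have simply made explicit the details the paper delegates to references.
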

%{\color{violet} se diciamo che la matrice $A$ e' rowstochastic non ci sarebbe bisogno di mettere $x \perp C_N^\top$ nella definizione, ma questo è un risultato che abbiamo nel nostro paper con francesco, non la definizione originale di induced seminorm.}
\begin{proof} Upon noticing that 
$A  ({\rm ker} C_N^\top) \subseteq  {\rm ker} C_N^\top,$
the result follows from Lemma 14 in \cite{GDP-FB-MEV:arXiv} and the conditional sub-multiplicativity property of the semi norms \cite{VVK:83}.
\end{proof}

\begin{mythe}\label{mythe:seminorm}(\emph{Expression for the  $\HKn^\top$-weighted, $\ell_\infty$ induced seminorm \cite{GDP-FB-MEV:arXiv}})
For a row stochastic matrix $A \in \real^{N \times N}$,  
\begin{equation}\label{eq:infty_weighted}
\begin{aligned}
     \seminorm{ A }{\infty,  \HKn^\top}  = 1-\min_{ij}\sum_{k=1}^N {\rm min} \{A_{ik},  A_{jk}\}.
\end{aligned}
\end{equation}
\end{mythe}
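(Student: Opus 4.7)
The plan is to establish the two directions of the equality separately, exploiting the classical Dobrushin-coefficient structure that is implicit in the expression $1-\min_{ij}\sum_{k}\min\{A_{ik},A_{jk}\}$. The upper bound is a direct calculation from the definition of the seminorm; the lower bound requires the explicit construction of an extremal vector.

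For the $\leq$ direction, I would fix a pair of row indices $(i,j)$ and write $(Ax)_i-(Ax)_j=\sum_{k}(A_{ik}-A_{jk})x_k$. Row stochasticity of $A$ implies $\sum_k(A_{ik}-A_{jk})=0$, so splitting the index set into $P_{ij}=\{k:A_{ik}\geq A_{jk}\}$ and its complement one obtains $\sum_{k\in P_{ij}}(A_{ik}-A_{jk})=\sum_{k\notin P_{ij}}(A_{jk}-A_{ik})=\alpha_{ij}$, where $\alpha_{ij}:=1-\sum_k\min\{A_{ik},A_{jk}\}$. I would then bound $|(Ax)_i-(Ax)_j|\leq \alpha_{ij}(\max_k x_k-\min_k x_k)\leq \alpha_{ij}\,\seminorm{x}{\infty,C_N^\top}$, and maximize over $(i,j)$ to get $\seminorm{Ax}{\infty,C_N^\top}\leq (1-\min_{ij}\sum_k\min\{A_{ik},A_{jk}\})\seminorm{x}{\infty,C_N^\top}$.

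For the reverse inequality I would exhibit an $x^\star$ attaining equality. Let $(i^\star,j^\star)$ achieve the minimum, and set $x^\star\in\{0,1\}^N$ to be the indicator of $P_{i^\star j^\star}$. Then $\seminorm{x^\star}{\infty,C_N^\top}\leq 1$ (with equality unless $A$ has all rows coincident, in which case both sides of the claim are zero and there is nothing to prove), and by construction $(Ax^\star)_{i^\star}-(Ax^\star)_{j^\star}=\sum_{k\in P_{i^\star j^\star}}(A_{i^\star k}-A_{j^\star k})=\alpha_{i^\star j^\star}$, so the ratio of seminorms equals $\alpha_{i^\star j^\star}$.

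The only subtlety, and the step I expect to be the main obstacle, is the orthogonality constraint $x\perp\ker(C_N^\top)$ appearing in the definition of $\seminorm{A}{\infty,C_N^\top}$. Because the complete graph is connected, $\ker(C_N^\top)=\mathrm{span}(\vect{1}_N)$, so I would simply replace $x^\star$ by $\tilde x^\star=x^\star-(N^{-1}\vect{1}_N^\top x^\star)\vect{1}_N$. Since $C_N^\top\vect{1}_N=\vect{0}$ and $A\vect{1}_N=\vect{1}_N$, both $C_N^\top\tilde x^\star=C_N^\top x^\star$ and $C_N^\top A\tilde x^\star=C_N^\top A x^\star$, so the translation preserves both seminorms while enforcing $\tilde x^\star\perp\ker(C_N^\top)$. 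This reconciliation between the discrete $\{0,1\}$-valued extremizer and the orthogonality requirement is really the only delicate point; once it is resolved, combining the two bounds gives the claimed identity.
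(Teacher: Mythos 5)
Your argument is correct, but note that the paper itself does not prove this theorem: it is imported verbatim from the cited reference \cite{GDP-FB-MEV:arXiv}, so there is no in-paper proof to compare against. What you have written is a correct, self-contained derivation via the classical Dobrushin ergodic-coefficient argument. The upper bound is sound: row stochasticity gives $\sum_k(A_{ik}-A_{jk})=0$, the split over $P_{ij}=\{k:A_{ik}\ge A_{jk}\}$ yields the common mass $\alpha_{ij}=1-\sum_k\min\{A_{ik},A_{jk}\}$, and the resulting bound $|(Ax)_i-(Ax)_j|\le\alpha_{ij}(\max_k x_k-\min_k x_k)$ holds for \emph{all} $x$, hence in particular on the constrained set defining the induced seminorm. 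The extremizer $x^\star=\mathbf{1}_{P_{i^\star j^\star}}$ is the right choice, and you correctly dispose of the degenerate case ($P_{i^\star j^\star}$ cannot be empty by stochasticity, and if it is all of $\{1,\dots,N\}$ then rows $i^\star$ and $j^\star$ coincide, which forces all rows to coincide and both sides to vanish). Your handling of the constraint $x\perp\ker(\HKn^\top)$ is also right: since the complete graph is connected, $\ker(\HKn^\top)=\mathrm{span}(\vect{1}_N)$, and the centering $\tilde x^\star=x^\star-(N^{-1}\vect{1}_N^\top x^\star)\vect{1}_N$ leaves both $\seminorm{\cdot}{\infty,\HKn^\top}$ of the vector and of its image under $A$ unchanged because $A\vect{1}_N=\vect{1}_N$. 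This last step is in the same spirit as the paper's Lemma~\ref{submult}, which invokes $A(\ker \HKn^\top)\subseteq\ker \HKn^\top$ precisely to remove the orthogonality restriction; your proof makes that mechanism explicit rather than citing it.
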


\section{The  average-based (multi-dimensional)  HK model}

In this section we introduce a multi-dimensional extension of the HK model in which agents compare their (scalar) average opinions on a set of topics, rather than (the vectors representing) their specific opinions topic by topic. This model is suitable to describe the situation when   the opinions that an agent has on the different topics are not too far apart, as it happens, for instance, when the topics are related and homogeneous.
\\
Given a group of $N \geq 2$ agents and $m \ge 2$ (related) topics, we let $X_{ij}(t)$ denote the {\em opinion} that agent $i$ has about the topic $j$ at the time instant $t$.
The \emph{average opinion} that the agent $i$ has about the $m$ topics at the time instant $t$ is given by
\begin{equation*}
    \bar{x}_i(t) = \frac{1}{m} \sum_{j=1}^m X_{ij}(t)
\end{equation*}
and in vector form 
\begin{equation}\label{eq:xbar}
    \bar{x}(t) = \frac{1}{m} X(t) \vect{1}_m.
\end{equation}
We assume that the opinion that the $i$-th agent has on topic $j$ at time $t+1$ is influenced only by the opinions at time $t$ on that same topic of agents
whose  average opinion about the $m$ topics is not too far from agent $i$'s average opinion at time $t$. Specifically, given a certain {\em confidence threshold} $\varepsilon >0$, we define the set of {\em neighbours} (or influencers) of the agent $i$ at the time instant $t$ as a function of the average opinions of the agents, namely as:
\begin{equation}
    \mathcal{N}_i^{\rm ave}(\bar{x}(t)) = \{k \in \{1,\dots,N\}:  |\bar{x}_k(t)-\bar{x}_i(t)| \le \varepsilon\}.
\end{equation}
Accordingly,  by adopting a notation similar  to the one in \cite{RP-MF-BT:18},
 the \emph{influence matrix} $\Phi^{\rm ave} \in \{0,1\}^{N \times N}$ of this {\em average-based HK model} is defined as
\begin{equation}
    \Phi^{\rm ave}_{ik}(\bar{x}(t)) :=
    \begin{cases}
    1, & \text{ if } k\in \mathcal{N}_i^{\rm ave}(\bar{x}(t)); \\
    0, & \text{ otherwise}.
    \end{cases}
\end{equation}
Upon defining the matrix
\begin{equation}
    D^{\rm ave}(\bar{x}(t)) := 
    \begin{bmatrix}
    |\mathcal{N}_1^{\rm ave}(\bar{x}(t))| & & \\
    & \ddots & \\
    & & |\mathcal{N}^{\rm ave}_N(\bar{x}(t))|
    \end{bmatrix},
\end{equation}
the opinion matrix $X(t)$ evolves over time as
\begin{equation}\label{eq:sys}
    X(t+1) = A^{\rm ave}(\bar x(t))X(t),
\end{equation}
where
$$A^{\rm ave}(\bar x(t)) := D^{\rm ave}(\bar{x}(t))^{-1}\Phi^{\rm ave}(\bar{x}(t))$$
 is well-posed since $D^{\rm ave}(\bar x(t))$ is nonsingular as a consequence of the fact that $i\in \mathcal{N}_i^{\rm ave}(\bar x(t))$ (and hence $|\mathcal{N}_i^{\rm ave}(\bar{x}(t))| \ge 1$) $\forall i\in \{1,\dots,N\}$, $\forall t\geq 0$.
Equation \eqref{eq:sys} component-wise reads as
\begin{equation}
X_{ij}(t+1) = \frac{1}{|\mathcal{N}_i^{\rm ave}(\bar{x}(t))|} \sum_{k = 1}^n \Phi^{\rm ave}_{ik}(\bar{x}(t)) X_{kj}(t).
\end{equation}

\section{Average-based HK model: Main Definitions}
In this section we introduce some fundamental definitions for the average-based HK model that will be used in the following.

\begin{mydef}[Consensus for average-based HK model] The average-based HK model 
\begin{align}
    X(t+1)&= A^{\rm ave}(\bar x(t)) X(t),\label{eq:model1}\\
     \bar{x}(t) &= \frac{1}{m} X(t) \vect{1}_m,\label{eq:model2}
\end{align}
with $X(0) \in \real^{N\times m}$, 
is said to reach \emph{consensus} if
\begin{equation}
    \lim_{t \rightarrow\infty} X(t) = \vect{1}_N c^\top, \quad \exists\, c\in \real^m.
\end{equation}
\end{mydef}

\begin{mydef}[Clustering for average-based HK model]
The average-based HK model \eqref{eq:model1}-\eqref{eq:model2} reaches {\em clustering} if there exists a partitioning of the agents $\mathcal{V}_1, \mathcal{V}_2, \dots, \mathcal{V}_d$ ($\mathcal{V}_i \cap \mathcal{V}_j= \emptyset$ for $i \neq j$, and $\cup_{i=1}^d \mathcal{V}_i= \{1,\dots,N\})$ such that $\forall i,k \in \mathcal{V}_\ell,$ $\ell \in \{1,\dots, d\}$,
\begin{equation}
    \lim_{t \rightarrow \infty}  X_{i*}(t) = \lim_{t \rightarrow \infty}   X_{k*}(t) 
\end{equation}
and $\forall i\in \mathcal{V}_\ell$, $\forall k \in \mathcal{V}_p$, 
%$\forall \ell, p\in \{1,\dots,d\},
$\ell \neq p$,
\begin{equation}\label{eq:question}
    \lim_{t \rightarrow \infty} X_{i*}(t) \neq \lim_{t \rightarrow \infty} X_{k*}(t). 
\end{equation}
\end{mydef}
\begin{mydef}[Range of opinions on a specific topic] \label{ROST} Given the average-based HK model \eqref{eq:model1}-\eqref{eq:model2}, the \emph{range of opinions on topic $j$}, at the time instant $t$, is defined as
\begin{equation}
\nu_j(X(t)) = \max_{i,k\in \{1,\dots,N\}} |X_{ij}(t)-X_{kj}(t)|.
\end{equation}
\end{mydef}
\begin{remark} \label{njseminorm} Note that $\nu_j(X(t)) =  \seminorm{X_{*j}(t)}{\infty,C_N^\top}$.
\end{remark}
\begin{comment}
\begin{mydef}[Range of opinions] Given the average-based HK model \eqref{eq:model1}-\eqref{eq:model2}, the \emph{range of opinions}, at the time instant $t$, is defined as
\begin{equation}
\nu(X(t)) = \max_{i,j} X_{ij}(t)- \min_{i,j} X_{ij}(t)
\end{equation}
\end{mydef}
 \end{comment}
\begin{mydef}[$\varepsilon$-Chain for the average-based HK model] Consider the average-based HK model \eqref{eq:model1}-\eqref{eq:model2} and 
 assume that the entries of the average opinion vector in \eqref{eq:xbar} are ordered as $\bar{x}_1(t)\leq \dots \leq\bar{x}_N(t)$. The average opinion vector is an \emph{$\varepsilon$-chain} at the time instant $t$ if for all $k \in \{1,\dots, N-1\}$ $\bar x_k(t)-\bar x_{k+1}(t)\leq \varepsilon$.

\end{mydef}
%\begin{mydef}[$\varepsilon$-Vector Chain]
%\end{mydef}

\begin{comment}
\begin{mydef}[Crack for the average-based HK model] 
Consider the average-based HK model \eqref{eq:model1}-\eqref{eq:model2} and 
 assume that the entries of the average opinion vector in \eqref{eq:xbar} are ordered as $\bar{x}_1(t)\leq \dots \leq\bar{x}_N(t)$.
 We say that in  the set of agents $\{1, \dots, N\}$ there is a \emph{crack}  if there exists $i\in \{1,\dots,N-1\}$ such that $\bar x_{i+1}(t)- \bar x_i(t) > \varepsilon$.
\end{mydef}
\end{comment}
\smallskip

\section{Average-based HK model: Opinion Ranges}

In this section we explore the monotonicity properties of the range of opinions defined in the previous section. As we will see, the average-based HK model preserves several nice properties of the scalar HK model \cite{RH-UK:02}.

\begin{myprop}[Range of opinions on topic] Given the  average-based HK model  \eqref{eq:model1}-\eqref{eq:model2}, for every choice of $X(0)\in {\mathbb R}^{N\times m}$, 
%the range of opinions $\{\nu(X(t)\}_{t\ge 0}$ and 
the range of opinions on a specific topic $\{\nu_j(X(t)\}_{t\ge 0}, j\in \{1,\dots,m\},$   is a non-increasing sequence.
\end{myprop}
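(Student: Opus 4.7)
The plan is to reduce the statement to an immediate application of the machinery already developed in Section II. By Remark \ref{njseminorm}, the quantity $\nu_j(X(t))$ is precisely the $C_N^\top$-weighted $\ell_\infty$ seminorm of the column $X_{*j}(t)$. Since the matrix equation \eqref{eq:sys} acts column-wise, we have $X_{*j}(t+1) = A^{\rm ave}(\bar{x}(t))\, X_{*j}(t)$ for every topic $j$, so the whole question becomes a statement about how the $C_N^\top$-weighted seminorm evolves under multiplication by $A^{\rm ave}(\bar{x}(t))$.

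First, I would invoke Lemma \ref{submult} on $A^{\rm ave}(\bar{x}(t))$, which is row stochastic by construction (each row sums to one because the influence pattern $\Phi^{\rm ave}$ is normalized by $D^{\rm ave}$). This yields
\begin{equation*}
\nu_j(X(t+1)) \;\le\; \seminorm{A^{\rm ave}(\bar{x}(t))}_{\infty, C_N^\top} \cdot \nu_j(X(t)).
\end{equation*}
Next, I would apply Theorem \ref{mythe:seminorm} to evaluate the induced seminorm of $A^{\rm ave}(\bar{x}(t))$. Since every entry of a row stochastic matrix is nonnegative, each term $\min\{A^{\rm ave}_{ik}, A^{\rm ave}_{jk}\}$ is nonnegative, and hence
\begin{equation*}
\seminorm{A^{\rm ave}(\bar{x}(t))}_{\infty, C_N^\top} \;=\; 1 - \min_{i,j}\sum_{k=1}^N \min\{A^{\rm ave}_{ik}(\bar x(t)), A^{\rm ave}_{jk}(\bar x(t))\} \;\le\; 1.
\end{equation*}
Combining the two bounds gives $\nu_j(X(t+1)) \le \nu_j(X(t))$, which is exactly the asserted monotonicity.

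There is essentially no hard step here: the real work has already been done in Lemma \ref{submult} (conditional sub-multiplicativity together with the invariance of $\ker C_N^\top$ under any row stochastic matrix) and in Theorem \ref{mythe:seminorm}. The only thing one must be a little careful about is justifying that the column-wise action of $A^{\rm ave}(\bar{x}(t))$ on $X_{*j}(t)$ is well-defined at each $t$ (which follows from the non-singularity of $D^{\rm ave}(\bar{x}(t))$, already noted in the paper), and observing that although $A^{\rm ave}(\bar{x}(t))$ depends on the average opinion vector and thus changes with time, the bound above is uniform in $t$, so iterating gives monotonicity of the whole sequence $\{\nu_j(X(t))\}_{t\ge 0}$ for each fixed $j \in \{1,\dots,m\}$.
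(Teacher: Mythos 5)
Your proof is correct, but it takes a different (and heavier) route than the paper's own proof of this proposition. The paper disposes of the statement in one line: since each column updates as $X_{*j}(t+1)=A^{\rm ave}(\bar x(t))X_{*j}(t)$ with $A^{\rm ave}(\bar x(t))$ row stochastic, every entry of $X_{*j}(t+1)$ is a convex combination of the entries of $X_{*j}(t)$, so the columnwise maximum cannot increase and the minimum cannot decrease, whence $\nu_j(X(t+1))\le\nu_j(X(t))$. You instead run the full seminorm machinery -- Remark \ref{njseminorm}, Lemma \ref{submult}, and Theorem \ref{mythe:seminorm} -- and then observe that $1-\min_{i\ell}\sum_k\min\{A_{ik},A_{\ell k}\}\le 1$ because the entries are nonnegative. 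This is exactly the argument the paper reserves for the \emph{next} result, Proposition \ref{myprop:range}, which strengthens monotonicity to the quantitative contraction $\nu_j(X(t+1))\le\gamma(\bar x(t))\,\nu_j(X(t))$; your proof delivers that sharper bound as a free by-product, at the cost of invoking two nontrivial prior results where none are needed. Conversely, the paper's elementary convex-combination argument also yields directly the observation recorded in the subsequent remark, namely that $X_{ij}(t)\in[\min_k X_{kj}(0),\max_k X_{kj}(0)]$ for all $t$, which the seminorm bound alone does not give since it only controls pairwise differences. Both arguments are valid; yours is simply the contraction-rate proof specialized to the trivial bound $\gamma\le 1$.
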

\begin{proof} 
\begin{comment}
As $A(\bar x(t))$ in \eqref{eq:model1} is
row stochastic,
\begin{align}
\max_{i,j}X_{ij}(t) &\geq \max_{i,j}X_{ij}(t+1)\\ 
\min_{i,j}X_{ij}(t) &\leq \min_{i,j}X_{ij}(t+1).
\end{align}
This proves that
$\nu(X(t+1))\leq \nu(X(t))$. \\
\end{comment}
The proof 
 follows from the  fact that 
each column of $X(t)$ in \eqref{eq:model1} 
updates according to the equation
\begin{equation}
    X_{*j}(t+1)   = A(\bar x(t)) X_{*j}(t).
    %, \quad j \in \{1,\dots, m\},
    \label{jcol_update}
\end{equation}
where $A(\bar x(t))$  is
row stochastic. 
\end{proof}
\begin{remark} By the same reasoning adopted to prove the previous result we can claim  that $\forall i \in \{1,\dots,N\},$ $j \in \{1,\dots,m\}$ and $t\geq 0$, one has  $X_{ij}(t) \in [\min_k X_{kj}(0), \max_k X_{kj}(0)]$. 
Consequently, if consensus is reached and we assume
 $c = [c_1\ \dots \ c_m]^\top$, then 
\begin{equation}
    \min_k X_{ki}(0) \leq c_i \leq \max_k X_{ki}(0).
\end{equation} 
\end{remark}
 In the following proposition we provide an alternative proof for the rate of contractivity of the range of opinions in the average-based HK model (see Remark \ref{remark12}, below). 
 
\begin{myprop}[Range of opinions]\label{myprop:range} Given the  average-based HK model  \eqref{eq:model1}-\eqref{eq:model2}, for every choice of $X(0)\in {\mathbb R}^{N\times m}$, 
 the range of opinions on a specific topic $\{\nu_j(X(t)\}_{t\ge 0}, j\in \{1,\dots,m\},$ 
 satisfies 
\begin{equation}
   \nu_j(X(t+1))  \leq \gamma(\bar x(t))  \nu_j(X(t)),
\end{equation}
where 
$$\gamma(\bar x(t)):= 1-\min_{i \ell}\sum_{k=1}^N \min\{A_{ik}(\bar{x}(t)),A_{\ell k}(\bar{x}(t))\}.$$
\end{myprop}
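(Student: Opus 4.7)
The plan is to chain together three facts that are already in the excerpt: the column-wise update \eqref{jcol_update}, the submultiplicativity Lemma \ref{submult}, and the closed-form expression of the weighted induced seminorm in Theorem \ref{mythe:seminorm}. The bridge is Remark \ref{njseminorm}, which identifies the range of opinions on topic $j$ with the $C_N^\top$-weighted $\ell_\infty$-seminorm of the $j$-th column of $X(t)$.

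First I would fix $j\in\{1,\dots,m\}$ and write, by \eqref{jcol_update}, $X_{*j}(t+1)=A^{\rm ave}(\bar x(t))\,X_{*j}(t)$. Since $A^{\rm ave}(\bar x(t))=D^{\rm ave}(\bar x(t))^{-1}\Phi^{\rm ave}(\bar x(t))$ is row stochastic by construction, Lemma \ref{submult} applies and gives
\begin{equation*}
\seminorm{X_{*j}(t+1)}{\infty,C_N^\top}\;\le\;\seminorm{A^{\rm ave}(\bar x(t))}{\infty,C_N^\top}\cdot\seminorm{X_{*j}(t)}{\infty,C_N^\top}.
\end{equation*}

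Next I would invoke Theorem \ref{mythe:seminorm}, again using row stochasticity of $A^{\rm ave}(\bar x(t))$, to identify
\begin{equation*}
\seminorm{A^{\rm ave}(\bar x(t))}{\infty,C_N^\top}\;=\;1-\min_{i,\ell}\sum_{k=1}^{N}\min\{A^{\rm ave}_{ik}(\bar x(t)),A^{\rm ave}_{\ell k}(\bar x(t))\}\;=\;\gamma(\bar x(t)).
\end{equation*}
Combining this with Remark \ref{njseminorm}, which rewrites $\seminorm{X_{*j}(\cdot)}{\infty,C_N^\top}=\nu_j(X(\cdot))$, yields exactly the claimed inequality $\nu_j(X(t+1))\le \gamma(\bar x(t))\,\nu_j(X(t))$.

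No real obstacle is hidden here: all the heavy lifting (the derivation of the seminorm formula and the conditional submultiplicativity that underpins Lemma \ref{submult}) has already been done in the preliminary section and is cited from \cite{GDP-FB-MEV:arXiv}. The only minor point worth flagging explicitly in the write-up is that the hypothesis of Lemma \ref{submult}, namely that $A^{\rm ave}(\bar x(t))\,\ker(C_N^\top)\subseteq \ker(C_N^\top)$, is automatic because $\ker(C_N^\top)=\mathrm{span}(\vect{1}_N)$ and every row stochastic matrix maps $\vect{1}_N$ to $\vect{1}_N$; this justifies treating the inequality as holding for every column of $X(t)$ rather than only for vectors orthogonal to $\vect{1}_N$.
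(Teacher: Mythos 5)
Your proof is correct and follows essentially the same route as the paper's own argument: apply the column-wise update \eqref{jcol_update}, then Lemma \ref{submult} for the submultiplicative bound, Theorem \ref{mythe:seminorm} to evaluate the induced seminorm of the row stochastic matrix $A^{\rm ave}(\bar x(t))$, and Remark \ref{njseminorm} to translate back to $\nu_j$. Your closing observation that $A^{\rm ave}(\bar x(t))\,\ker(C_N^\top)\subseteq\ker(C_N^\top)$ holds automatically by row stochasticity is a nice explicit justification of a point the paper leaves inside the proof of Lemma \ref{submult}, but it does not change the argument.
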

\begin{proof}
Consider \eqref{jcol_update}, where $A(\bar x(t))$ is row stochastic. From Remark \ref{njseminorm} and the submultiplicativity property of the induced matrix seminorms, we get 
%\eqref{eq:submuprop}
\begin{align*}
&\nu_j(X(t+1))= \seminorm{X_{*j}(t+1)}{\infty,C_N^\top} \\
&=  \seminorm{A(\bar{x}(t))X_{*j}(t)}{\infty,C_N^\top} \leq\seminorm{A(\bar{x}(t))}{\infty,C_N^\top} \seminorm{X_{*j}(t)}{\infty,C_N^\top}\\
&=\Big( 1-\min_{i\ell}\sum_{k=1}^N \min\{A_{ik}(t),A_{\ell k}(t)\}\Big)\nu_j(X(t))
\end{align*}
where the  inequality follows from Lemma \ref{submult}, while the last identity from Theorem \ref{mythe:seminorm}. %in the Preliminaries section.
\end{proof}

\begin{myrem}\label{remark12}
The result in Proposition \ref{myprop:range} has also been proven in Lemma 1 in \cite{UK:00} Theorem 2.3 in \cite{SM-ET:14}  and Theorem 5.2 and Lemma 5.1 \cite{JS:00} for the scalar case. {\color{black}   Lemma 2.2 in \cite{RH-UK:19} pertains the multidimensional case.}   The proof in \cite{RH-UK:19} and  the one in \cite{UK:00} follow from arithmetic manipulation based on the row stochasticity of the matrix $A(t)$. The proof in  \cite{SM-ET:14} is related to a continuous time dynamics and the gap is computed with respect to a generic norm. The proof follows from the definition of dual norm. The proof in \cite{JS:00} exploits geometric considerations.
\end{myrem}
\smallskip

Opinions of the agents on each topic do not enjoy any order preservation property. 
  So, even if $\bar x_i(t) \le \bar x_j(t)$, nothing can be said about $X_{ik}(t)$ and $X_{jk}(t)$ for specific values of $k\in \{1,\dots,m\}$.
%Intuitively, this is due to the fact that the bounded confidence is applied on the average of opinions that each agents has on topics rather than on the opinions on topics themselves. 
 For this reason,   agents whose opinions on a specific topic are very close may not influence each other.   Also, differently from the scalar case, there is no guarantee for order preservation among opinions.

\section{Average-based HK model: Steady State behavior}
We first note that the vector of the average opinions $\bar{x}(t)$ in \eqref{eq:xbar} obeys the dynamics
\begin{align}\label{eq:avemodel}
    \bar{x}(t+1) &= \frac{1}{m} X(t+1) \vect{1}_m  =\frac{1}{m} A^{\rm ave}(\bar{x}(t)) X(t) \vect{1}_m \notag \\
   %&=\frac{1}{m}D(\bar{x}(t))^{-1} \Phi^{\rm ave}(\bar{x}(t)) X(t) \vect{1}_m 
    &=A^{\rm ave}(\bar{x}(t)) \bar{x}(t), %\notag
\end{align}
and hence it follows a scalar HK model.
 Upon a reordering of the agents, so that $\bar{x}_1(0) \leq \bar{x}_2(0) \leq \dots \leq \bar{x}_N(0)$ then, see Proposition 1 in \cite{VDB-JMH-JNT:09}, we can guarantee that $\bar{x}_1(t) \leq \bar{x}_2(t) \leq \dots \leq \bar{x}_N(t)$, $\forall t \geq 0$. Also, by Proposition 2 in \cite{VDB-JMH-JNT:09},    each sequence $\{\bar{x}_i(t)\}_{t\ge 0}$ is monotone and non-increasing and limited by $\bar{x}_N(0)$, therefore 
 %$\lim_{t\rightarrow \infty} \bar{x}_1(t)$ exists and hence $\forall i\ge 2$, 
 $\lim_{t \rightarrow \infty} \bar{x}_i(t)$ exists and is finite, for every $i\in\{1,\dots,m\}$. Moreover, if $\bar{x}^* := \lim_{t \rightarrow \infty } \bar{x}(t)$,  $\forall i \in \{1, \dots, N-1\}$ either $\bar{x}_i^* = \bar{x}_{i+1}^*$ or $|\bar{x}_{i+1}^* - \bar{x}_{i}^*| > \varepsilon$, namely the steady state average opinions reach either consensus or clustering. Finally, according to Theorem 1 in \cite{VDB-JMH-JNT:09}, the limit configuration is reached in a finite number of steps, i.e., $\exists \, t^* \geq 0$ such that $\bar{x}(t^*) = \bar{x}^* \in \real^N$.

\begin{myrem}
As proved in \cite{JCD:01} consensus is reached if and only if the sequence $\bar{x}(t)$ is an $\varepsilon$-chain for all $t \geq 0$,   by this meaning that, assuming the initial ordering $\bar{x}_1(0) \leq \bar{x}_2(0) \leq \dots \leq \bar{x}_N(0)$, then we have $|\bar x_{i+1}(t)-\bar x_i(t)|\le \varepsilon$, for every $i\in \{1,\dots,N-1\}$ and $t\ge 0$.
%Moreover, consensus is reached in finite time.
\end{myrem}
\smallskip

Let us suppose that from $t^* \geq 0$ on-wards, 
\begin{equation}
    \bar{x}(t) =\frac{1}{m} X(t) \vect{1}_m = \bar{x}^* \in \real^N
\end{equation}
and hence
\begin{equation}
    X(t+1) = A^{\rm ave}(\bar{x}^*)X(t).
\end{equation}
Let us consider first, the case when $\bar{x}^* = c^* \vect{1}_N$, that is, $D^{\rm ave}(\bar{x}^*) = N I_N$ and $\Phi^{\rm ave}(\bar{x}^*) = \vect{1}_N \vect{1}_N^\top$.
Therefore, $\forall t \geq t^*$ we have $A^{\rm ave}(\bar{x}(t))= A^{\rm ave}(\bar{x}^*) = \frac{1}{N} \vect{1}_N \vect{1}_N^\top$  which is a constant doubly-stochastic symmetric matrix. Consequently, $\forall t \geq t^*$
\begin{equation}
    X(t+1) = \frac{1}{N} \vect{1}_N \vect{1}_N^\top X(t)
\end{equation}
which implies
\begin{equation}
    X(t^*+1) = \vect{1}_N [\bar{m}_1(t^*), \dots, \bar{m}_m(t^*)]
\end{equation}
where 
\begin{equation}
    \bar{m}_j(t^*) := \frac{1}{N} \sum_{i=1}^N X_{ij}(t^*)
\end{equation}
represents the average opinion of the agents on the $j$-th topic at the time instant $t^*$.
Consequently,
\begin{align}
    X(t^*+2) &= \frac{1}{N} \vect{1}_N\vect{1}_N^\top X(t^*+1)\\
    &= \frac{1}{N} \vect{1}_N\vect{1}_N^\top (\vect{1}_N [\bar{m}_1(t^*), \dots, \bar{m}_m(t^*)])\\
    &= \vect{1}_N[\bar{m}_1(t^*), \dots, \bar{m}_m(t^*)],
\end{align}
which means that if the HK model that describes the evolution of the average opinions of the agents on the $m$ topics reaches consensus at the time instant $t^*$, then   the punctual opinions of the agents on the topics reach consensus   at the next time-step
$$
     X(t) = X(t^*+1) = \vect{1}_N [m_1(t^*), \dots, m_m(t^*)], \ \forall t\ge t^*+1.
$$

Let us consider now the case when there exists $t^*$ such that $\bar{x}(t^*) = [c_1^* \vect{1}^\top_{n_1} | c_2^* \vect{1}^\top_{n_2} | \dots | c^*_d \vect{1}^\top_{n_d}]^\top$, namely the mean values of the agents' opinions on the $m$ topics clusterize into $d$ disjoint clusters:
$\mathcal{V}_1, \dots,\mathcal{V}_d$, $|\mathcal{V}_i| = n_i$, in each of which the average opinion takes value $c_i^*$ and $|c_i^*-c_{i+1}^*| > \varepsilon, \forall \, i \in \{1,\dots,N-1\} $. In this case, the matrices $D^{\rm ave}(\bar{x}^*)$ and $\Phi^{\rm ave}(\bar{x}^*)$  in \eqref{eq:model1}-\eqref{eq:model2} take the structure
\begin{align}
    &D^{\rm ave}(\bar{x}^*) = 
    \begin{bmatrix}
    n_1 I_{n_1} & & \\
    & \ddots & \\
    & & n_d I_{n_d}
    \end{bmatrix} \notag\\
        &\Phi^{\rm ave}(\bar{x}^*) = 
    \begin{bmatrix}
    \vect{1}_{n_1} \vect{1}_{n_1}^\top & & \\
    & \ddots & \\
    & & \vect{1}_{n_d} \vect{1}_{n_d}^\top
    \end{bmatrix}
\end{align}
and for all $t \geq t^*$
\begin{equation}
X(t+1)=
        \begin{bmatrix}
    \frac{1}{n_1}\vect{1}_{n_1} \vect{1}_{n_1}^\top & & \\
    & \ddots & \\
    & & \frac{1}{n_d}\vect{1}_{n_d} \vect{1}_{n_d}^\top
    \end{bmatrix} X(t).
\end{equation}
Consequently, 
\begin{equation}
X(t^*+1)=        \begin{bmatrix}
    \vect{1}_{n_1} & & \\
    & \ddots & \\
    & &  \vect{1}_{n_d}
    \end{bmatrix} M(t^*)
\end{equation}
with  
$M(t^*) \in {\mathbb R}^{d \times m}$ and 
\begin{align*}
   \vect{e}_i^\top M(t^*) &= \frac{1}{n_i} [\vect{0}^\top | \vect{1}_{n_i}^\top | \vect{0}^\top] X(t^*)= \frac{1}{n_i} \sum_{k \in I_i} \vect{e}_k^\top X(t^*), 
\end{align*}
 where $I_i = \{n_1+ \dots + n_{i-1}+1,\dots, n_1+ \dots + n_{i-1}+n_i\}$ is the set of agents in the $i$-th cluster. The $j$-th entry of the row vector $\vect{e}_i^\top M(t^*) \in \real^{1\times m}$ represents the  average opinion on the $j$-th topic of the agents in the $i$-th cluster.
Moreover,
\begin{align*}
X(t^*+2)&=    \begin{bmatrix}
    \frac{1}{n_1}\vect{1}_{n_1} \vect{1}_{n_1}^\top & & \\
    & \ddots & \\
    & & \frac{1}{n_d}\vect{1}_{n_d} \vect{1}_{n_d}^\top
    \end{bmatrix}  \cdot\\
    &\cdot \begin{bmatrix}
    \vect{1}_{n_1} & & \\
    & \ddots & \\
    & &  \vect{1}_{n_d}
    \end{bmatrix} M(t^*)\\
     &=\begin{bmatrix}
    \vect{1}_{n_1} & & \\
    & \ddots & \\
    & &  \vect{1}_{n_d}
    \end{bmatrix} M(t^*) = X(t^*+1).
\end{align*} 
Therefore, if the average opinions clusterize at the time instant $t^*$ then, from $t^*+1$ onward, the punctual opinions clusterize as well by maintaining the same partition, in $d$ clusters, as the average opinions of the agents over the $m$ topics. %, in  $s \leq n_d$ clusters and the clusters $\tilde{\mathcal{V}}_1, \dots, \tilde{\mathcal{V}}_s$ of the punctual opinions are such that $\tilde{\mathcal{V}}_i$ either coincides with one of the clusters of the mean opinions, say $\mathcal{V}_j$ or it coincides with the union of two or more of them, $\mathcal{V}_j$, $j \in \{1,\dots,d\}$. 
Note that
$$M(t^*) \vect{1}_{m}= \begin{bmatrix} c_1^*  & c_2^* &  \dots & c^*_d \end{bmatrix}^\top.$$

\begin{myrem}
If the average opinion vector $\bar{x}(t)$ clusterizes in $d$ clusters $\mathcal{V}_1,\mathcal{V}_2, \dots, \mathcal{V}_d$ then the opinions on each single topic $j$ clusterize in $d_j\leq d$ clusters and each cluster, say $\tilde{\mathcal{V}}_i$, is the union of one or more clusters  $\mathcal{V}_1,\mathcal{V}_2, \dots, \mathcal{V}_d$.
\end{myrem}

To summarize the results of this section we propose the following theorem.

\begin{theorem}[Steady state of average-based HK model]
Given the  average-based HK model  \eqref{eq:model1}-\eqref{eq:model2}, for every choice of $X(0)\in {\mathbb R}^{N\times m}$ the systems dynamics reaches a steady state configuration in a finite number of steps. Moreover, the average-based HK model  reaches consensus (clustering) if and only if the HK model describing the evolution of the average opinions reaches consensus (clustering).
\end{theorem}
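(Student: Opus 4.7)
The proof will essentially consolidate the analysis already carried out in this section. The plan is first to reduce the question to the scalar HK dynamics governing the averages and then to handle the two iff statements (consensus-iff-consensus and clustering-iff-clustering). The key starting point is equation \eqref{eq:avemodel}, which shows that the average-opinion vector $\bar x(t)$ itself evolves according to a scalar HK model. By Theorem 1 of \cite{VDB-JMH-JNT:09}, there exists a finite $t^* \geq 0$ such that $\bar x(t) = \bar x^*$ for every $t \geq t^*$, where $\bar x^*$ is either consensus ($c^*\vect{1}_N$) or a clustering vector with $d\geq 2$ pairwise distinct values $c_1^*, \dots, c_d^*$.

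The forward directions, from averages to opinions, follow directly from the explicit block computations already displayed. Once $\bar x(t) = \bar x^*$, the matrix $A^{\rm ave}(\bar x^*)$ is time-invariant and has either the rank-one form $\tfrac{1}{N}\vect{1}_N\vect{1}_N^\top$ or the block-diagonal form with blocks $\tfrac{1}{n_i}\vect{1}_{n_i}\vect{1}_{n_i}^\top$. A single multiplication then yields $X(t^*+1) = \vect{1}_N[\bar m_1(t^*),\dots,\bar m_m(t^*)]$ in the consensus case and the block-stacked matrix in the clustering case; a second multiplication shows these configurations are fixed points of the iteration. This simultaneously establishes finite-time convergence (in at most $t^*+1$ steps) and the two implications (averages consensus $\Rightarrow$ $X$ consensus) and (averages cluster $\Rightarrow$ $X$ clusters on the same partition).

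For the reverse directions, the consensus case is immediate by linearity of the averaging map: $X(t) \to \vect{1}_N c^\top$ gives $\bar x(t) \to \tfrac{c^\top \vect{1}_m}{m}\,\vect{1}_N$, i.e.\ consensus of the averages. The clustering case I would prove by contraposition: if the averages do not cluster then, by the scalar-HK dichotomy, they reach consensus, hence by the forward direction just proved $X$ reaches consensus, contradicting the hypothesis that $X$ clusters with $d \geq 2$.

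The step that I expect to require the most care is the verification, in the clustering-implies-clustering direction, that the rows of the reduced matrix $M(t^*)$ are in fact pairwise distinct, so that $X$ genuinely clusters into $d$ classes and does not collapse to fewer clusters or to consensus. This is where the full strength of the scalar-HK termination is used: the identity $M(t^*)\vect{1}_m = [c^*_1,\dots, c^*_d]^\top$ together with $c_i^* \neq c_j^*$ for $i \neq j$ forces any two equal rows of $M(t^*)$ to have equal averages, contradicting the assumed clustering of $\bar x^*$. Once this is in place, all remaining steps are routine linear-algebraic manipulations on block-diagonal matrices combined with the cited scalar-HK results.
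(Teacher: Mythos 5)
Your proposal is correct and follows essentially the same route as the paper, which proves this theorem implicitly through the section's development: reduce to the scalar HK dynamics of $\bar{x}(t)$ via \eqref{eq:avemodel}, invoke the finite-termination and consensus/clustering dichotomy of \cite{VDB-JMH-JNT:09}, and then verify by the explicit block computations that $X(t^*+1)$ is a fixed point inheriting the consensus or cluster structure. Your explicit treatment of the reverse implications and of the pairwise distinctness of the rows of $M(t^*)$ (via $M(t^*)\vect{1}_m=[c_1^*,\dots,c_d^*]^\top$ and $|c_i^*-c_{i+1}^*|>\varepsilon$) makes rigorous a point the paper only records as a closing identity, but it is not a different method.
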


The following result shows that if the maximum gap between the average opinions does not change when moving from time $t$ to time $t+1$, then the same maximum gap remains at all subsequent times, thus showing that if such gap is nonzero then consensus is not reached.

\begin{myprop}
%[Sufficient condition for {\color{violet} clustering}]
Consider the  average-based HK model  \eqref{eq:model1}-\eqref{eq:model2}.
If at some time $t\geq 0$ one gets
\begin{equation}\label{myeq:maxequality}
    \max_{ij \in \{1,\dots,N\}}\!\! |\bar{x}_i(t)-\bar{x}_j(t)| =\!\! \max_{ij \in \{1,\dots,N\}}\!\!\! |\bar{x}_i(t+1)-\bar{x}_j(t+1)|
\end{equation}
then
\begin{equation}\label{myeq:maxequality2}
    \max_{ij \in \{1,\dots,N\}}\!\!\! |\bar{x}_i(t+1)-\bar{x}_j(t+1)| =\!\!\!\!\!\! \max_{ij \in \{1,\dots,N\}}\!\!\! |\bar{x}_i(t+2)-\bar{x}_j(t+2)|.
\end{equation}
Therefore,   if the quantity in \eqref{myeq:maxequality} is positive, then
the    average-based HK model  \eqref{eq:model1}-\eqref{eq:model2} does not achieve consensus.
\begin{comment}
{\strikethrough \eqref{myeq:maxequality} holds if and only if $\bar{x}_1(t+1) = \bar{x}_1(t)$ and $\bar{x}_N(t+1) = \bar{x}_N(t)$. Hence  $\mathcal{N}_1(t) = \{1\}$, $\mathcal{N}_N(t) = \{N\}$ and hence there is a crack between the agents $1$ and $2$ and the agents $N-1$ and $N$, so \eqref{myeq:maxequality2} holds.}
\end{comment}
\end{myprop}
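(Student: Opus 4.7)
The plan is to exploit the ordering preservation for $\bar x(t)$ that was invoked earlier via Proposition~1 in \cite{VDB-JMH-JNT:09}: after a reordering at time $t$, one has $\bar x_1(t+s)\le\bar x_2(t+s)\le\dots\le \bar x_N(t+s)$ for all $s\ge 0$. Hence the maximum gap reduces to $\bar x_N(\cdot)-\bar x_1(\cdot)$. First I would observe that $\bar x_1$ is non-decreasing and $\bar x_N$ is non-increasing in time: indeed, $\bar x_1(t+1)$ is a convex combination of the values $\{\bar x_k(t):k\in\mathcal{N}_1^{\rm ave}(\bar x(t))\}$, each of which is $\ge \bar x_1(t)$, and symmetrically for $\bar x_N$. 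Consequently, equality \eqref{myeq:maxequality} forces the two equalities $\bar x_1(t+1)=\bar x_1(t)$ and $\bar x_N(t+1)=\bar x_N(t)$ to hold separately.

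Next, from $\bar x_1(t+1)=\bar x_1(t)$ I would conclude that every $\bar x_k(t)$ with $k\in\mathcal{N}_1^{\rm ave}(\bar x(t))$ equals $\bar x_1(t)$: a convex combination of reals no smaller than $\bar x_1(t)$ equals $\bar x_1(t)$ only if each of them does. Setting $J_1(t):=\{k:\bar x_k(t)=\bar x_1(t)\}$, this gives $\mathcal{N}_1^{\rm ave}(\bar x(t))=J_1(t)$, which in turn means that every $j\notin J_1(t)$ lies at distance strictly greater than $\varepsilon$ from $\bar x_1(t)$. An analogous argument, applied to $\bar x_N$, yields $J_N(t):=\{k:\bar x_k(t)=\bar x_N(t)\}$, $\mathcal{N}_N^{\rm ave}(\bar x(t))=J_N(t)$, and $|\bar x_j(t)-\bar x_N(t)|>\varepsilon$ for every $j\notin J_N(t)$.

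With this structural separation in hand, I would propagate it to time $t+1$. For each $k\in J_1(t)$ one has $\mathcal{N}_k^{\rm ave}(\bar x(t))=J_1(t)$, so $\bar x_k(t+1)=\bar x_1(t)$; for $j\notin J_1(t)$, $\bar x_j(t+1)$ is a convex combination of values $\bar x_k(t)$ with $k\notin J_1(t)$ (each exceeding $\bar x_1(t)+\varepsilon$), hence $\bar x_j(t+1)>\bar x_1(t)+\varepsilon$. Therefore, at time $t+1$ the set of agents attaining the minimum value $\bar x_1(t)$ is again isolated by a gap greater than $\varepsilon$, and the same holds for the maximum value. Applying the first step with $t+1$ in place of $t$ then gives $\bar x_1(t+2)=\bar x_1(t+1)=\bar x_1(t)$ and $\bar x_N(t+2)=\bar x_N(t+1)=\bar x_N(t)$, which is exactly \eqref{myeq:maxequality2}.

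For the final claim, I would simply iterate: by induction on $s$, $\max_{ij}|\bar x_i(t+s)-\bar x_j(t+s)|$ is constant in $s$. If this quantity is positive, then $\bar x(t+s)$ cannot converge to a consensus vector $c\vect{1}_N$, and neither can $X(t+s)$ in view of \eqref{eq:xbar}. The main obstacle, in my view, is the careful bookkeeping in step two and three showing that the neighbourhoods $\mathcal{N}_1^{\rm ave}$ and $\mathcal{N}_N^{\rm ave}$ remain exactly equal to $J_1$ and $J_N$ also after the update; this exact-equality of neighbourhoods, rather than mere inclusion, is what lets the argument iterate and preserves the gap for all future times.
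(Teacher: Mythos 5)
Your proof is correct and follows the same route as the paper's: order the agents, use the monotonicity of the extreme averages $\bar x_1(\cdot)$ and $\bar x_N(\cdot)$ to force $\bar x_1(t+1)=\bar x_1(t)$ and $\bar x_N(t+1)=\bar x_N(t)$, and then propagate. The paper compresses the remaining work into the phrase ``that easily leads to \eqref{myeq:maxequality2}''; your neighbourhood bookkeeping (showing that $\mathcal{N}_1^{\rm ave}$ and $\mathcal{N}_N^{\rm ave}$ collapse to the sets of extremal agents, which stay isolated by a gap larger than $\varepsilon$ at all later times) is precisely the elided detail, so it is a correct filling-in of the same argument rather than a different one.
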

\begin{proof}
Assume, without loss of generality, that $\bar{x}_1(t) \leq \bar{x}_2(t) \leq \dots \leq \bar{x}_N(t)$, then one has $\max_{ij\in \{1,\dots,N\}}|\bar{x}_i(t)-\bar{x}_j(t)| = \bar{x}_N(t)-\bar{x}_1(t)$.
Since  $\bar x_1(t+1) \geq \bar x_1(t)$ and  $\bar x_{N}(t+1) \leq \bar x_{N}(t)$, then \eqref{myeq:maxequality} implies 
$\bar x_1(t+1) = \bar x_1(t)$ and  $\bar x_{N}(t+1) = \bar x_{N}(t)$, that easily leads to \eqref{myeq:maxequality2}. Since the sequence of average opinions does not reach consensus, neither does the sequence $\{X(t)\}_{t\ge 0}$.
\end{proof}

\section{The uniform affinity model}

The multi-dimensional HK model investigated in \cite{SRE-TB:15,SRE-TB-AN-BT:13,AN-BT:12} has a  structure similar to the one we explored in the previous sections, however it adopts as a   criterion   to define the opinion proximity the distance (induced by the norm) between the opinion vectors of the agents.
Specifically, it is assumed that the neighbours   of agent $i$ at time $t$ are\footnote{Since the norm is formally defined for column vectors, while $X_{i*}(t)$ and $X_{k*}(t)$ are row vectors, we moved to their transposed versions.}
\begin{equation*}
    \mathcal{N}_i(X(t)) = \{k \in \{1,\dots,N\}: \norm{ X_{i*}(t)^\top  -  X_{k*}(t)^\top} \le \varepsilon\},
\end{equation*}
where $\varepsilon>0$ is the confidence threshold and $\norm{\cdot}$ denotes an arbitrary norm.
Accordingly, the  influence matrix $\Phi\in \{0,1\}^{N \times N}$ at time $t$ is the one whose $(i,k)$-th entry is
\begin{equation}
    \Phi_{ik}(X(t)) =
    \begin{cases}
    1, & \text{if } k \in \mathcal{N}_i(X(t));\\
    0, & \text{otherwise}.
    \end{cases}
\end{equation}
Upon defining the matrix\footnote{In the following we will replace ${\mathcal N}_i(X(t))$ with the more compact notation ${\mathcal N}_i(t).$  If we assume that $X(0)$ is assigned, the notation makes perfect sense.}
\begin{equation}
    D(X(t)) := 
    \begin{bmatrix}
    |\mathcal{N}_1(X(t))| & & \\
    & \ddots & \\
    & & |\mathcal{N}_N(X(t))|
    \end{bmatrix}
\end{equation}
the opinion matrix $X(t)$ evolves over time as
\begin{equation}\label{eq:sys_ta}
    X(t+1) = A(X(t))X(t),
\end{equation}
where
\begin{equation}
    \label{matriceA}
A(X(t)) := D(X(t))^{-1}\Phi(X(t))
\end{equation} 
  is well-posed {\color{black}($D(X(t))$ is nonsingular)  and row stochastic.}
  \\
  In the references \cite{SRE-TB:15,SRE-TB-AN-BT:13,AN-BT:12} the main focus has been on proving that the multi-dimensional HK model \eqref{eq:sys_ta}, with the row stochastic matrix $A(X(t))$ defined as above, (for any choice of the norm $\norm{\cdot}$) converges to  a steady-state solution in a finite number of steps, and on providing an upper bound on the termination time (see, in particular,
  \cite{SRE-TB:15}). The interesting aspect is that the termination time is independent of the number $m$ of topics. See Figure \ref{Fig1} for an example of an uniform affinity model with $N=10$ agents and $m=2$ topics that reaches consensus.\\
 % as a consequence of the fact that $i\in \mathcal{N}_i(X(t)),$ (and hence $|\mathcal{N}_i(X(t))| \ge 1$), $\forall i\in \{1,\dots,N\}$, $\forall t\geq 0$.
%\begin{myprop}
%Upon defining
%\begin{equation}
%    \mathcal{H}_i(t) = {\rm convex}\text{ } {\rm hull} \{X_{j*}, j \in \mathcal{N}_i(X%(t))\},
%\end{equation}
%as the polygon with vertices the vectors $X_{j*}$, $j \in \mathcal{N}_i(X(t))$,
%if $d(\mathcal{H}_i(t),\mathcal{H}_j(t)) > \varepsilon$, then $d(\mathcal{H}_i(t+1),\mathcal{H}_j(t+1)) > \varepsilon$.
%\end{myprop}
%\begin{proof}
%The proof directly follows from the fact that, since $D(X(t))^{-1}\Phi^{\rm ta}(X(t))$ in \eqref{eq:sys_ta} is a row-stochastic matrix all the elements in $\mathcal{H}_i(t)$ evolve as a convex combination of its vertices $X_{j*}$ and hence the distance between the sets $\mathcal{H}_i(t)$, $\mathcal{H}_j(t)$ can not decrease.
%\end{proof}
In this section we want to explore some monotonicity properties of the previous model by considering specifically the case when the norm is the $\ell_\infty$-norm. This means that
\begin{equation*}
    \mathcal{N}_i(X(t)) = \{j %\in\{1,\dots,N\}
    : \max_{k\in \{1,\dots,m\}} | X_{ik}(t)  -  X_{jk}(t) | \le \varepsilon\}
\end{equation*}
 so, in order for two agents to influence each other, their opinions must be close topic-wise.  This model is in line with the spirit of bounded-confidence even in contexts in which agents  take different positions about the various topics. 
We will refer to the multi-dimensional HK model with $\ell_\infty$-norm \eqref{eq:sys_ta} as    the uniform affinity model.  \\
We first prove that if we consider the range of opinions on a specific topic $k$ at time $t$ and we consider the largest of such values over all the possible topics, then such a quantity is non increasing over time. 

\begin{myprop}{\label{myprop:gap}}(\emph{Range of opinions in uniform affinity HK model})
For the uniform affinity model, the quantity     
\begin{align*}
\nu(X(t)) &:= \max_{\substack{i,j \in \{1,...,N\}\\ k \in \{1,...,m\}}}  |X_{ik}(t)-X_{jk}(t)| \\
&= \max_{k \in \{1,...,m\}} \nu_k(X(t))
\end{align*}
is non increasing over time, namely $\nu(X(0)) \geq \nu(X(1)) \geq \nu(X(2)) \geq \dots$.
\end{myprop}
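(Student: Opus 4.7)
The plan is to reduce the statement to a topic-wise (column-wise) argument and then take the max. The key observation is that in the uniform affinity model, although the matrix $A(X(t))$ in \eqref{matriceA} depends jointly on all topics (the neighborhood $\mathcal{N}_i(X(t))$ is determined by $\ell_\infty$ proximity across all coordinates of the opinion vectors), once $A(X(t))$ is fixed at time $t$ the update decouples column-wise: for every topic $k \in \{1,\dots,m\}$,
\begin{equation*}
    X_{*k}(t+1) = A(X(t))\, X_{*k}(t).
\end{equation*}
This is exactly the situation handled by the tools of Section II.

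First, I would recall from Remark \ref{njseminorm} that $\nu_k(X(t)) = \seminorm{X_{*k}(t)}{\infty,C_N^\top}$. Since $A(X(t))$ is row stochastic (as noted right after \eqref{matriceA}), Lemma \ref{submult} yields
\begin{equation*}
    \seminorm{X_{*k}(t+1)}{\infty,C_N^\top} \leq \seminorm{A(X(t))}{\infty,C_N^\top}\,\seminorm{X_{*k}(t)}{\infty,C_N^\top}.
\end{equation*}
Next, by Theorem \ref{mythe:seminorm},
\begin{equation*}
    \seminorm{A(X(t))}{\infty,C_N^\top} = 1 - \min_{i\ell}\sum_{h=1}^{N} \min\{A_{ih}(X(t)),A_{\ell h}(X(t))\} \leq 1,
\end{equation*}
so $\nu_k(X(t+1)) \leq \nu_k(X(t))$ for every $k \in \{1,\dots,m\}$.

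Finally, taking the maximum over the topic index,
\begin{equation*}
    \nu(X(t+1)) = \max_{k}\nu_k(X(t+1)) \leq \max_{k}\nu_k(X(t)) = \nu(X(t)),
\end{equation*}
which gives the claim by induction on $t$. There is no real obstacle here: the only point to notice is that the state-dependence of $A$ on the full matrix $X(t)$ is irrelevant for the bound, because at each fixed time $A(X(t))$ is a row-stochastic matrix acting separately on each column, and the $\HKn^\top$-weighted $\ell_\infty$ seminorm captures precisely the per-topic range. A brief remark could alternatively give a direct convex-combination argument (each entry $X_{ik}(t+1)$ is a convex combination of $\{X_{jk}(t)\}_j$, hence lies in $[\min_j X_{jk}(t), \max_j X_{jk}(t)]$), but the seminorm route is consistent with the framework already set up and requires no additional machinery.
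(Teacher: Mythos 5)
Your proof is correct, but it takes a genuinely different route from the paper's own proof of this proposition. The paper argues directly and elementarily: each entry $X_{ik}(t+1)$ is the average of $\{X_{dk}(t)\}_{d\in\mathcal{N}_i(t)}$, hence a convex combination of the time-$t$ opinions on topic $k$, so every pairwise difference $|X_{ik}(t+1)-X_{jk}(t+1)|$ is bounded by $\max_\ell X_{\ell k}(t)-\min_\ell X_{\ell k}(t)\le \nu(X(t))$, and taking the maximum over $i,j,k$ finishes the argument. You instead recycle the seminorm machinery of Section II (column-wise decoupling, Remark \ref{njseminorm}, Lemma \ref{submult}, Theorem \ref{mythe:seminorm}), which is precisely the strategy the paper deploys for the analogous statement in the \emph{average-based} model (Proposition \ref{myprop:range}) but chooses not to reuse here. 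Both arguments are sound: row stochasticity of $A(X(t))$ is all either one needs, and you are right that the joint dependence of the neighbourhoods on all topics is irrelevant once $A(X(t))$ is frozen at time $t$. Your route is slightly heavier but buys more: it yields the per-topic monotonicity $\nu_k(X(t+1))\le\nu_k(X(t))$ for each $k$ (which the paper only records afterwards as a consequence of the proposition) and, at no extra cost, the quantitative contraction factor $1-\min_{i\ell}\sum_{h=1}^N\min\{A_{ih}(X(t)),A_{\ell h}(X(t))\}$, i.e., the uniform-affinity analogue of Proposition \ref{myprop:range}. The paper's convex-combination argument is the more elementary and self-contained of the two.
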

\begin{proof}
We first observe that $\forall {k} \in \{1,\dots,m\}$ 
%and $\forall i,j \in \{1,\dots,N\}$ it holds $\nu(X(t)) \geq |X_{i{k}}(t)-X_{j{k}}(t)|$ and hence in particular 
\begin{align}
\nu(X(t)) &\geq \max_{ij}|X_{ik}(t)-X_{jk}(t)|\\ &= \max_i X_{i k}(t) - \min_j X_{jk}(t)\\
&= X_{uk}(t)-X_{lk}(t)
\end{align}
for some specific $u, l$.\\
For all $i,j \in \{1,\dots,N\}$ and   $k \in \{1,\dots,m\}$
\begin{align*}
    &|X_{ik}(t+1)-X_{jk}(t+1)| = \\
    &\Big| \sum_{d \in \mathcal{N}_i(t)} \frac{1}{|\mathcal{N}_i(t)|}X_{dk}(t) - \sum_{d \in \mathcal{N}_j(t)} \frac{1}{|\mathcal{N}_j(t)|}X_{dk}(t) \Big|\leq\\
    & \Big| \max_\ell X_{\ell k}(t) - \min_\ell X_{\ell k}(t) \Big| = \\
    & \Big| X_{u k}(t) -  X_{l k}(t) \Big| = X_{uk}(t)-X_{lk}(t) \leq \nu(X(t)).
\end{align*}
Since this is true for all $i,j \in \{1,\dots,N\}$ and for all $k \in \{1,\dots,m\}$, then it is also true that
\begin{equation*}
    \nu(X(t+1)) = \max_{\substack{ij\in \{1,\dots,N\}\\ k \in \{1,\dots,m\}}} |X_{ik}(t+1)-X_{jk}(t+1)|  \leq \nu(X(t)).
\end{equation*}
\end{proof}
\smallskip

\begin{figure*}[t]
    \centering
    \begin{minipage}{\columnwidth}
    \includegraphics[scale=0.29]{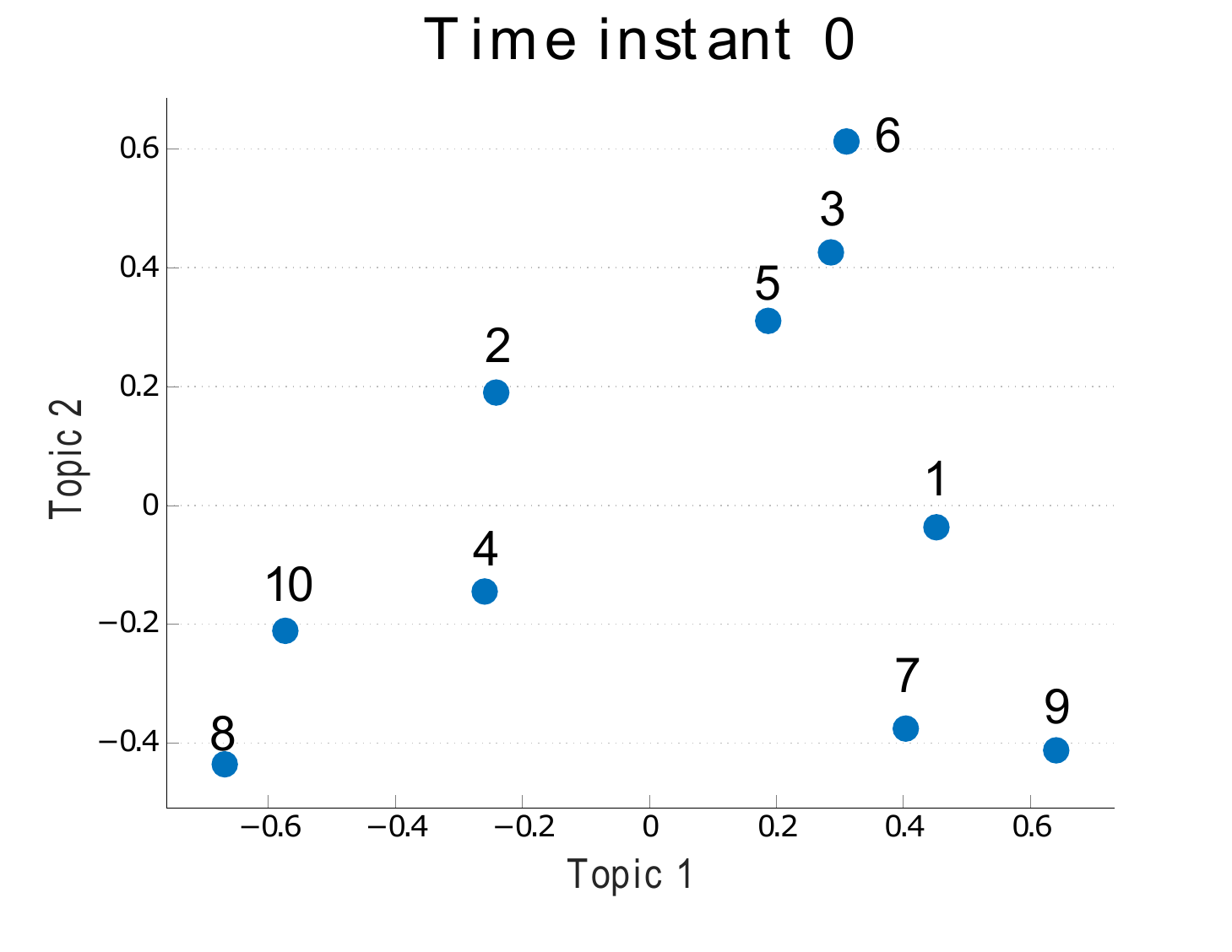}\hfill
     \includegraphics[scale=0.29]{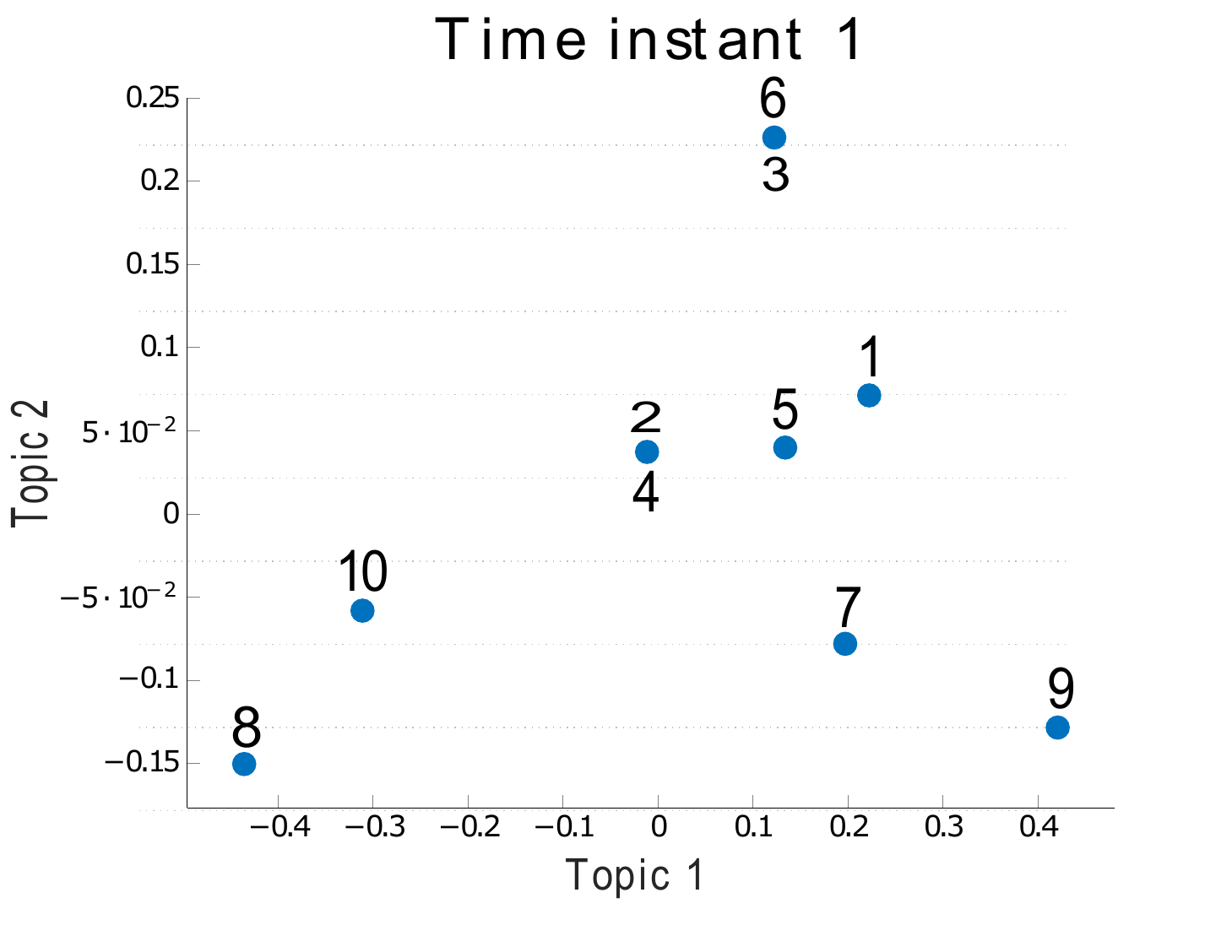}
   
    \end{minipage}\hfill
    \begin{minipage}{\columnwidth}
     \includegraphics[scale=0.29]{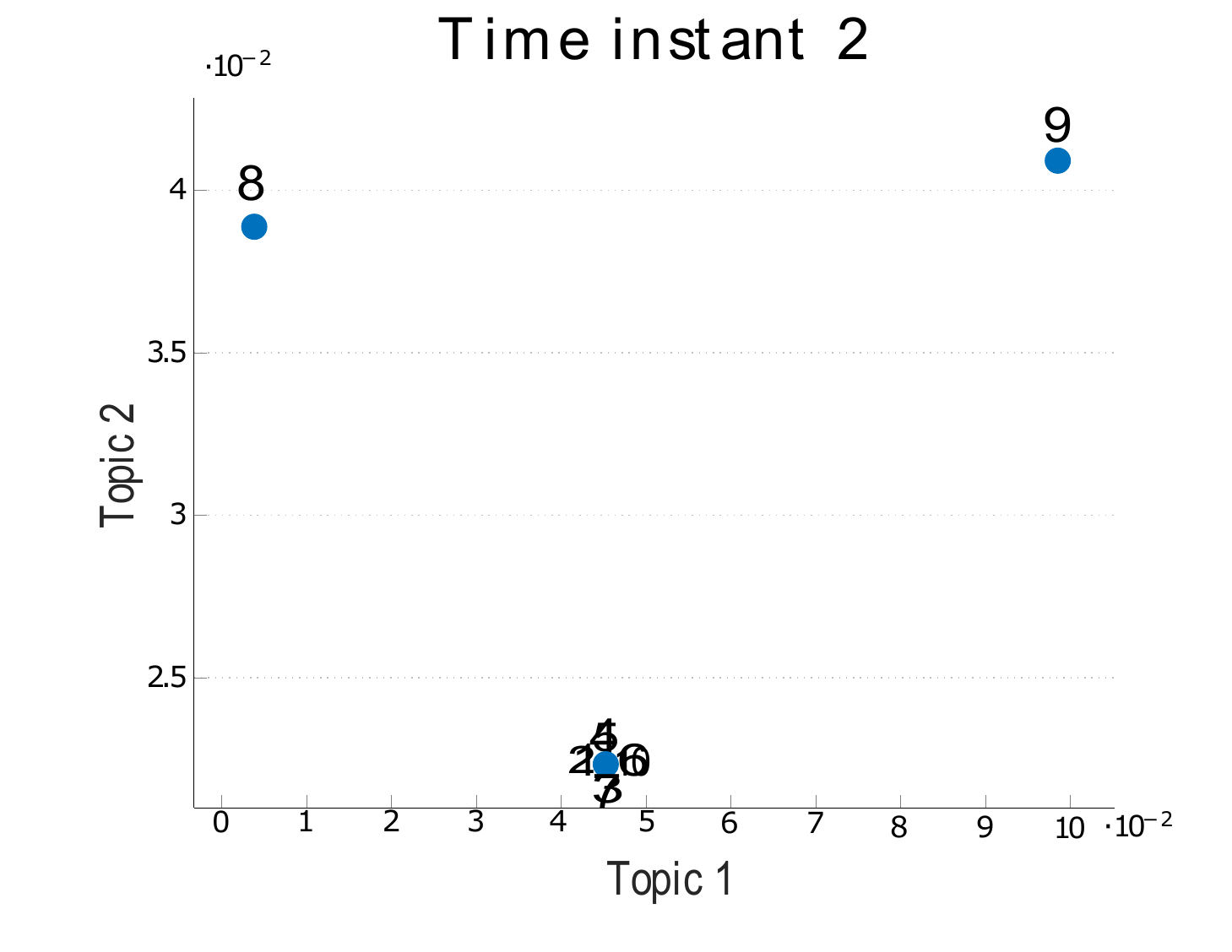}\hfill
    \includegraphics[scale=0.29]{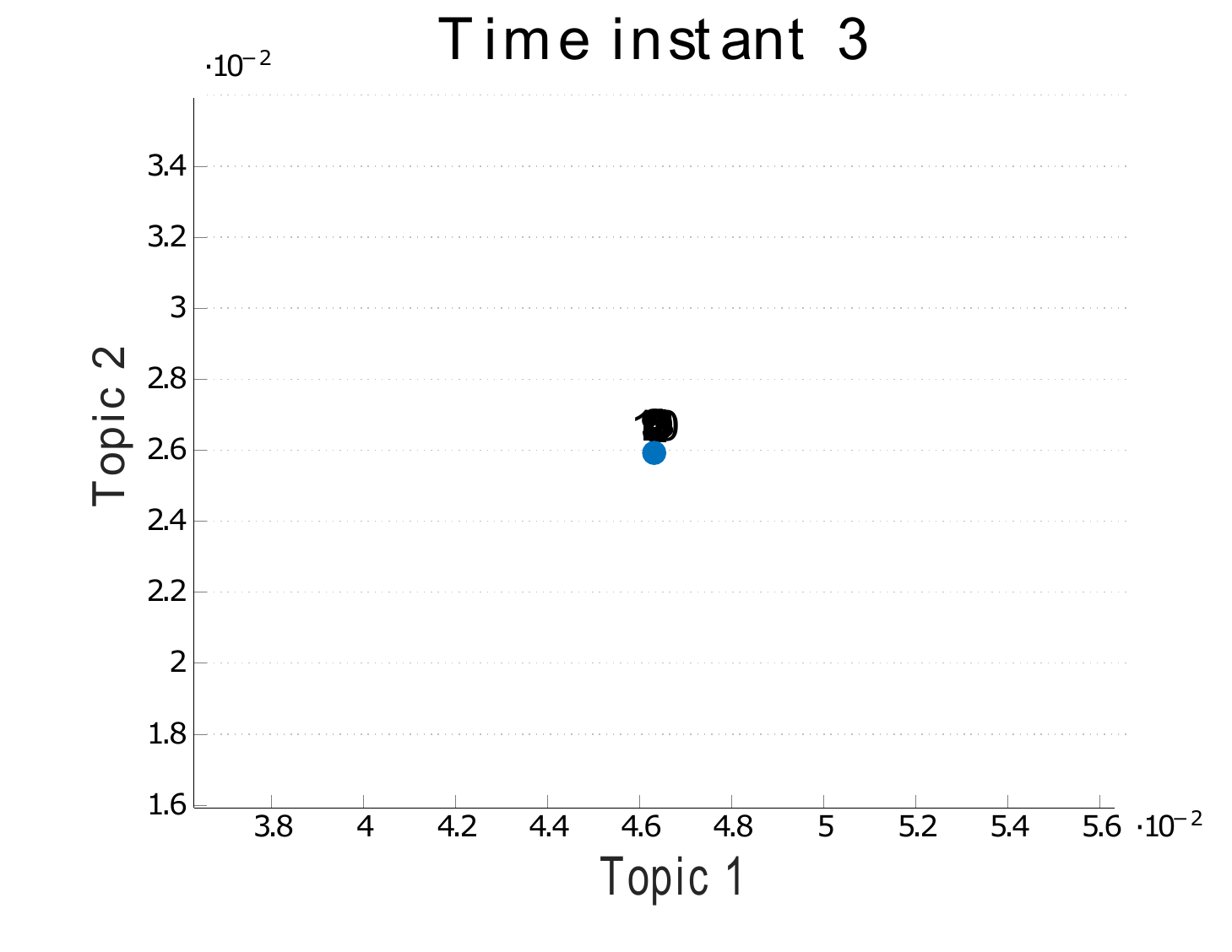}
     %   \caption{second figure}
    \end{minipage}
    \caption{Convergence to consensus of the uniform affinity model with $N=10$ agents, $m=2$ topics, confidence threshold $\varepsilon = 0.8$. Initial conditions are uniformly generated in the interval $[-1,1]^m$. Convergence occurs after $3$ iterations.}
    \label{Fig1}
\end{figure*}

 A consequence of Proposition \ref{myprop:gap} is that the opinion gap on each single topic   (see Definition \ref{ROST}) is non-increasing too. 
\smallskip 

Differently from what happens with the standard HK model (and partly with the average-based HK model), there is no way to introduce a meaningful total ordering in $\mathbb{R}^m$ and hence in the set of all agents' opinions.
In fact, in general, the ordering is different on each topic, and condition $|X_{ik}(t) - X_{jk}(t)|\le \varepsilon$ for some specific $k$ does not ensure that $i$ and $j$ are neighbours. So, it may happen that $X_{ik}(t) < X_{jk}(t)$, but at the subsequent time step $X_{ik}(t+1) > X_{jk}(t+1).$  
See, for instance, Figure \ref{Fig1}, where moving from $t=0$ to $t=1$ the opinions of agents $1$ and $5$ on topic 2 swap  (even if eventually they all converge to a consensus). \\   However, if   at some time $t$  every pair of  agents who do not influence each other  have opinions about {\em all} the topics that differ by more than $\varepsilon$, then the opinions' ordering on each topic remains unaltered when moving from time $t$ to $t+1$.

 \begin{myprop}[One-step order preservation]\label{myprop:ordering}
Consider the uniform affinity model and suppose that at some time $t\ge 0$ 
one has that  for every $i,j\in\{1,\dots, N\}$ %$i\ne j,$ 
condition $j\not\in {\mathcal N}_i(t)$ implies \begin{equation}
|X_{ik}(t) - X_{jk}(t)| > \varepsilon,
\quad \forall\ k\in \{1,\dots,m\}.
    \label{tuttidistanti}
\end{equation}
%Then 
%\begin{itemize}
    %\item[i)]  $|X_{ik}(t+1) - X_{jk}(t+1)| > %\varepsilon, \forall\ k\in \{1,\dots,m\};$
    %\item[ii)]  
    If for every $k\in \{1,\dots,m\}$ we   sort the agents' opinions (the order specifically depending on $k$) so that 
    $X_{i_1 k}(t)\le X_{i_2 k}(t) \le \dots \le X_{i_N k}(t),$ then the same opinion ordering on that topic is preserved at $t+1$, i.e.,
    $X_{i_1 k}(t+1)\le X_{i_2 k}(t+1) \le \dots \le X_{i_N k}(t+1).$
%\end{itemize}
\end{myprop}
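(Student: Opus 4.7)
The plan is to reduce the claim, under the stated hypothesis, to the classical one-step order-preservation argument for the scalar HK model, applied topic by topic. Fix an arbitrary topic $k\in\{1,\dots,m\}$. The key observation is that at the given time $t$ the $\ell_\infty$ neighborhood $\mathcal{N}_i(t)$ is determined by the topic-$k$ coordinate alone: if $d\in\mathcal{N}_i(t)$ then $|X_{ik}(t)-X_{dk}(t)|\le\varepsilon$ by definition of the $\ell_\infty$ norm; conversely, if $d\notin\mathcal{N}_i(t)$, the hypothesis \eqref{tuttidistanti} forces $|X_{ik'}(t)-X_{dk'}(t)|>\varepsilon$ for \emph{every} $k'$, and in particular for $k'=k$. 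Hence $\mathcal{N}_i(t)=\{d\in\{1,\dots,N\}:|X_{ik}(t)-X_{dk}(t)|\le\varepsilon\}$, which is exactly the scalar HK neighborhood built from topic $k$.

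From this reduction the argument is standard. Sorting the agents as in the statement, $X_{i_1k}(t)\le\dots\le X_{i_Nk}(t)$, each neighborhood $\mathcal{N}_{i_p}(t)$ is the contiguous index block $\{i_{l_p},\dots,i_{r_p}\}$, where $l_p$ (respectively $r_p$) is the smallest (largest) index $l$ satisfying $X_{i_lk}(t)\ge X_{i_pk}(t)-\varepsilon$ (respectively $X_{i_lk}(t)\le X_{i_pk}(t)+\varepsilon$). Both thresholds are non-decreasing in $X_{i_pk}(t)$, so $p\le q$ entails $l_p\le l_q$ and $r_p\le r_q$: the classical ``staircase'' structure of neighborhoods along a sorted sequence. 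I then invoke the elementary lemma that, for a non-decreasing sequence $y_1\le\dots\le y_N$ and index intervals $A=\{a,\dots,b\}$, $B=\{c,\dots,d\}$ with $a\le c$ and $b\le d$, one has $\tfrac{1}{|A|}\sum_{l\in A}y_l\le\tfrac{1}{|B|}\sum_{l\in B}y_l$. This is proved by splitting $A$ and $B$ into the common block $A\cap B$, the left tail $A\setminus B$ (whose entries are no larger than those of the common block) and the right tail $B\setminus A$ (whose entries are no smaller), with the degenerate case $A\cap B=\emptyset$ being even simpler since then every entry of $A$ sits to the left of every entry of $B$. Applying the lemma with $y_l=X_{i_lk}(t)$ and with the intervals $\{l_p,\dots,r_p\}$, $\{l_q,\dots,r_q\}$, the update rule then gives $X_{i_pk}(t+1)\le X_{i_qk}(t+1)$ for every $p\le q$, which is the announced one-step order preservation on topic $k$.

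The conceptual difficulty is concentrated in the reduction step. The $\ell_\infty$ neighborhood is a priori an $m$-dimensional object and there is no reason why it should coincide with a one-dimensional interval along any single coordinate; the hypothesis \eqref{tuttidistanti} is exactly what removes the borderline configurations in which two agents are close on topic $k$ but $\ell_\infty$-far overall, and is also the natural place where a counterexample should be sought if the assumption is weakened, in the spirit of the topic-$2$ swap between agents $1$ and $5$ visible in Figure \ref{Fig1}. Once the reduction is available, the staircase step followed by the averaging lemma is essentially the well-known monotonicity proof for the scalar HK model, transcribed topic by topic.
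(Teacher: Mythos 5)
Your proof is correct and follows essentially the same route as the paper's: under hypothesis \eqref{tuttidistanti} the $\ell_\infty$ neighbourhoods collapse to contiguous blocks in the topic-$k$ order, and your comparison of the two neighbourhood averages via the common block plus left/right tails is exactly the paper's decomposition into $\Delta_{hk}(t)$, $\tilde{X}_{i_h k}(t)$, $\tilde{X}_{i_{h+1} k}(t)$. The only (minor) difference is presentational: you make the reduction to the scalar topic-$k$ neighbourhood explicit and treat arbitrary pairs $p\le q$ through a stated interval-averaging lemma, whereas the paper works directly with consecutive pairs and the corresponding set containments.
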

\begin{proof} 
Let $k$ be arbitrary in $\{1,\dots,m\}$ and consider $i_h \in\{i_1, \dots, i_{N-1}\}$.\\
  We preliminarily observe that if $i_{h+1} \not\in {\mathcal N}_{i_h}(t)$, assumption \eqref{tuttidistanti}  implies that
\begin{align*}
&{\mathcal N}_{i_h}(t)\cap {\mathcal N}_{i_{h+1}}(t)=\emptyset;\\
&{\mathcal N}_{i_h}(t)\setminus {\mathcal N}_{i_{h+1}}(t) \subseteq \{i_1,\dots, i_h\};\\
&{\mathcal N}_{i_{h+1}}(t)\setminus {\mathcal N}_{i_{h}}(t) \subseteq \{i_{h+1},\dots, i_N\}
\end{align*}
On the other hand, if $i_{h+1} \in {\mathcal N}_{i_h}(t)$, then
\begin{align*}
&{\mathcal N}_{i_h}(t)\setminus {\mathcal N}_{i_{h+1}}(t) \subseteq \{i_1,\dots, i_{h-1}\};\\
&{\mathcal N}_{i_{h+1}}(t)\setminus {\mathcal N}_{i_{h}}(t) \subseteq \{i_{h+2},\dots, i_N\}.\\
\end{align*}
So,
 we can define
\begin{align*}
    \Delta_{hk}(t)&:= \frac{1}{|\mathcal{N}_{i_h}(t)\cap \mathcal{N}_{i_{h+1}}(t)|}\sum_{\ell \in \mathcal{N}_{i_h}(t)\cap \mathcal{N}_{i_{h+1}}(t)} X_{\ell k}(t),\\
    \tilde{X}_{i_hk}(t)&:= \frac{ \sum_{\ell \in \mathcal{N}_{i_h}(t)\setminus \mathcal{N}_{i_{h+1}}(t)} X_{\ell k}(t)}{|\mathcal{N}_{i_h}(t)\setminus \mathcal{N}_{i_{h+1}}(t)|},\\
    \tilde{X}_{i_{h+1}k}(t)&:= \frac{ \sum_{\ell \in \mathcal{N}_{i_{h+1}}(t)\setminus \mathcal{N}_{i}(t)} X_{\ell k}(t)}{|\mathcal{N}_{i_{h+1}}(t)\setminus \mathcal{N}_{i}(t)|},
\end{align*}
and get
\begin{align*}
    &X_{i_hk}(t+1)=\\
    &\frac{|\mathcal{N}_{i_h}(t)\cap \mathcal{N}_{i_{h+1}}(t)|}{|\mathcal{N}_{i_h}(t)|}\Delta_{hk}(t) + \frac{|\mathcal{N}_{i_h}(t)\setminus \mathcal{N}_{i_{h+1}}(t)|}{|\mathcal{N}_{i_h}(t)|}  \tilde{X}_{i_hk}(t)  
\end{align*}
and similarly
\begin{align*}
&X_{i_{h+1}k}(t+1)=\\
&\frac{|\mathcal{N}_{i_h}(t)\cap \mathcal{N}_{i_{h+1}}(t)|}{|\mathcal{N}_{i_{h+1}}(t)|}\Delta_{hk}(t)+ \frac{|\mathcal{N}_{i_{h+1}}(t)\setminus \mathcal{N}_{i_h}(t)|}{|\mathcal{N}_{i_{h+1}}(t)|}  \tilde{X}_{jk}(t).  \end{align*}
Since $\tilde{X}_{i_{h+1}k}(t) \ge \Delta_{hk}(t) \ge \tilde{X}_{i_hk}(t)$ if $i_h$ and $i_{h+1}$ are neighbours, while $\Delta_{hk}(t)=0$ and $\tilde{X}_{i_{h+1}k}(t) > \tilde{X}_{i_hk}(t)$ 
if $i_h$ and $i_{h+1}$ are not neighbours, it follows  that $\tilde{X}_{i_{h+1}k}(t+1) \ge  \tilde{X}_{i_hk}(t+1)$.

\end{proof}

The reasoning behind the previous result can be extended to a different situation when
 the agents' opinions at some time $t$ are ordered so that for every $k\in \{1,\dots,m\}$
$$X_{1k}(t) \le X_{2k}(t)\le\dots\le X_{Nk}(t).$$
When so, such ordering is preserved at all subsequent time instants. This is based on the fact that if $i < j$ and $i$ and $j$  are not neighbours,  then there exists $\bar k$ such that $X_{j\bar k}(t) - X_{i\bar k}(t) > \varepsilon$. But this implies that for every $p <i$ one has
$X_{j\bar k}(t) - X_{p\bar k}(t) > \varepsilon$, and for every $q > j$ one has
$X_{q\bar k}(t) - X_{i\bar k}(t) > \varepsilon$.
Consequently, ${\mathcal N}_i(t)\cap \{j, j+1, \dots, N\} =\emptyset$ and similarly
${\mathcal N}_j(t)\cap \{1, 2, \dots, i\} =\emptyset$. Conversely, if $i < j$ and $i$ and $j$  are  neighbours, then ${\mathcal N}_i(t)\cap {\mathcal N}_j(t) \supseteq \{i, i+1, \dots, j\}.$ Based on these comments, the proof of the following result can be easily obtained by mimicking the proof of 
Proposition \ref{myprop:ordering}.

\begin{myprop}[Sufficient condition for order preservation]\label{myprop:ordering0} Consider the uniform affinity model. If at some time $t\ge 0$ one has that for every topic $k\in \{1,\dots,m\}$
$$X_{1k}(t) \le X_{2k}(t)\le\dots\le X_{Nk}(t),$$
then it is also true that for every $\tau \ge 0$ and  every topic $k\in \{1,\dots,m\}$
$$X_{1k}(t+\tau ) \le X_{2k}(t+\tau)\le\dots\le X_{Nk}(t+\tau).$$
\end{myprop}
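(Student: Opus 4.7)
The plan is induction on $\tau\ge 0$: it suffices to prove that if the topic-wise common ordering $X_{1j}(s)\le X_{2j}(s)\le\dots\le X_{Nj}(s)$ holds for every $j\in\{1,\dots,m\}$ at some time $s\ge t$, then the same ordering holds at $s+1$. Fixing $s$, a topic $k\in\{1,\dots,m\}$ and an index $h\in\{1,\dots,N-1\}$, the goal reduces to showing $X_{hk}(s+1)\le X_{(h+1)k}(s+1)$. A useful structural remark makes this tractable: under the common ordering, for each $i$ the function $\ell\mapsto \max_{j\in\{1,\dots,m\}}|X_{\ell j}(s)-X_{ij}(s)|$ is non-increasing for $\ell\le i$ and non-decreasing for $\ell\ge i$, so its sub-level set $\mathcal{N}_i(s)$ is an interval $[a_i,b_i]$ of consecutive indices containing $i$.

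If $h+1\notin\mathcal{N}_h(s)$ the conclusion is essentially free. The argument sketched right before the statement gives $\mathcal{N}_h(s)\subseteq\{1,\dots,h\}$ and $\mathcal{N}_{h+1}(s)\subseteq\{h+1,\dots,N\}$, so $X_{hk}(s+1)$ is a convex combination of values $X_{\ell k}(s)\le X_{hk}(s)$ and, symmetrically, $X_{(h+1)k}(s+1)\ge X_{(h+1)k}(s)$. Hence $X_{hk}(s+1)\le X_{hk}(s)\le X_{(h+1)k}(s)\le X_{(h+1)k}(s+1)$.

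The interesting case is $h+1\in\mathcal{N}_h(s)$. Here I would first establish that the two intervals are \emph{ordered by shift}, namely $a_h\le a_{h+1}$ and $b_h\le b_{h+1}$. For the first inequality, pick $\ell=a_{h+1}\le h$; by the common ordering the row $X_{\ell *}(s)$ is componentwise dominated by both $X_{h*}(s)$ and $X_{(h+1)*}(s)$, so the $\ell_\infty$-distance from $X_{h*}(s)$ to $X_{\ell *}(s)$ is bounded by the one from $X_{(h+1)*}(s)$ to $X_{\ell *}(s)$, which is $\le\varepsilon$; hence $\ell\in\mathcal{N}_h(s)$ and $a_h\le a_{h+1}$, the bound $b_h\le b_{h+1}$ being symmetric. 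The remaining ingredient is a monotone-averaging lemma: if $g$ is a non-decreasing function on $\{1,\dots,N\}$ and $I_1=[a_1,b_1]$, $I_2=[a_2,b_2]$ satisfy $a_1\le a_2$ and $b_1\le b_2$, then $\ave_{I_1}g\le\ave_{I_2}g$; when $I_1\cap I_2\neq\emptyset$, each side is a convex combination of $\ave_{I_1\cap I_2}g$ and an average over the complementary portion of the respective interval, which by monotonicity lies below, respectively above, $\ave_{I_1\cap I_2}g$, while the disjoint case is immediate from $\ave_{I_1}g\le g(b_1)\le g(a_2)\le\ave_{I_2}g$. Applying the lemma to $g(\ell)=X_{\ell k}(s)$ with $I_1=\mathcal{N}_h(s)$ and $I_2=\mathcal{N}_{h+1}(s)$ closes the induction step. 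The main obstacle is this bookkeeping: once the interval structure of the $\mathcal{N}_i(s)$ and the shift inequalities $a_h\le a_{h+1}$, $b_h\le b_{h+1}$ are in place, the monotone-averaging lemma makes the conclusion routine.
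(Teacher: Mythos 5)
Your proof is correct and follows essentially the same route as the paper: the paper also reduces to comparing adjacent agents $h$ and $h+1$, localizes ${\mathcal N}_h(t)\setminus{\mathcal N}_{h+1}(t)$ among the lower-indexed agents and ${\mathcal N}_{h+1}(t)\setminus{\mathcal N}_{h}(t)$ among the higher-indexed ones, and writes each updated opinion as a convex combination of the average over ${\mathcal N}_h(t)\cap{\mathcal N}_{h+1}(t)$ and the average over the respective set difference --- which is exactly your monotone-averaging lemma specialized to shift-ordered intervals. Your explicit derivation of the interval structure $[a_i,b_i]$ and of the shift inequalities $a_h\le a_{h+1}$, $b_h\le b_{h+1}$ is just a slightly more systematic packaging of the same bookkeeping.
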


\section{Conclusions}

In this paper two multi-dimensional extensions of the Hegelsemann-Krause model have been proposed.
Multi-dimensionality refers to the fact that   agents are asked to express their opinion on multiple topics.

The average-based model assumes that two agents interact when their opinions lie, on average, within each others' confidence interval. This models fits to contexts in which the topics into play are somehow related, so that the agents are unlikely to  assume   far apart opinions about the various topics.
The uniform affinity model is a specific case of the multidimensional HK model proposed in \cite{SRE-TB-AN-BT:13,SRE-TB:15,AN-BT:12} in which the $\ell_\infty$ norm is exploited to measure the distance between opinions.
This model applies to more general contexts, in which the topics may be unrelated and each individual may take far apart positions about different topics.
It has been shown how, after some algebraic manipulation, the average-based model can be traced back to a standard HK model and consensus/clustering on each single topic is achieved if and only if consensus/clustering is achieved on average. For the uniform affinity model sufficient conditions for order preservation among opinions have been proposed.

\begin{comment}
\begin{figure*}[t]
    \centering
    \begin{minipage}{\columnwidth}
    \includegraphics[scale=0.28]{VERTEX0.pdf}
    \includegraphics[scale=0.28]{VERTEX2.pdf}\hfill
    \end{minipage}
    \begin{minipage}{\columnwidth}
     \includegraphics[scale=0.28]{VERTEX4.pdf}\hfill
    \includegraphics[scale=0.28]{VERTEX6.pdf}
     %   \caption{second figure}
    \end{minipage}
    \caption{{\color{violet} write sentence about necessity of convex hull in definition of crack} uniform affinity model with $N=10$ agents, $m=2$ topics, confidence threshold $\varepsilon = 1$. Convergence occurs after $6$ iterations. {\color{violet} se le vuoi piu' grandi finiscono per occupare due righe. PARLARE DI QUESTA FIGURA DA QUALCHE PARTE} }
\end{figure*}

CONJECTURES:
\begin{itemize}
    \item if $X(0)$ is a (vector) $\varepsilon$ chain then the uniform affinity model enjoys the order preservation property.
    \item If the opinion matrix is a (vector) $\varepsilon$ chain $\forall t$ then consensus is reached.
\end{itemize}
%TO DO: put apex ta on every matrix, and on $\mathcal{N}$ define matrix $A(X(t))$
\end{comment}

\bibliographystyle{plainurl+isbn}
\bibliography{alias, Main, FB,BibHK, Refer166}

\clearpage

%%%%%%%%%%%%%%%%%%%%%%%%%%%%%%%%%%%%%%%%%%%%%%%%%%%%%%%%%%%%%%%%%%%%%%%%%%%%%%%%

\end{document}